\documentclass[12pt, letterpaper]{article}

\usepackage{amsmath,amsthm,amssymb}
\usepackage[margin=1in]{geometry}
\usepackage{setspace}
\usepackage{natbib}
\usepackage{graphicx}
\usepackage{comment}
\usepackage{mathrsfs}
\usepackage{xcolor}
\usepackage{enumitem}

% \title{Bayesian Nonparametric Adaptivity as a Way of Life}
\title{Defensive Model Expansion for Robust Bayesian Inference}
\date{}
\author{Antonio R. Linero\thanks{Department of Statistics and Data Sciences,
    University of Texas at Austin, Austin TX, USA
  \\ Address for correspondence: Welch 5.216, 105 E 24th St D9800, 78705, Austin
  TX, USA. \texttt{antonio.linero@austin.utexas.edu}}}

\newcommand{\bmu}{\boldsymbol \mu}
\newcommand{\bK}{\boldsymbol K}
\newcommand{\bX}{\boldsymbol X}

\newcommand{\bZ}{\boldsymbol Z}
\newcommand{\Bernoulli}{\operatorname{Bernoulli}}

\newcommand{\Cov}{\operatorname{Cov}}
\newcommand{\E}{\mathbb E}

\newcommand{\Fisher}{\mathcal I}
\newcommand{\GP}{\operatorname{GP}}
\newcommand{\Holder}{Hölder}
\newcommand{\Identity}{\mathrm I}
\newcommand{\iid}{\stackrel{\text{iid}}{\sim}}
\newcommand{\Ihat}{\widehat I_N}

\newcommand{\InvGam}{\operatorname{InvGam}}
\newcommand{\ip}[2]{\langle #1 , #2 \rangle}

\newcommand{\Matern}{Matérn}

\newcommand{\Normal}{\operatorname{Normal}}
\newcommand{\Reals}{\mathbb R}

\newcommand{\sF}{\mathcal F}
\newcommand{\sH}{\mathcal H}
\newcommand{\sM}{\mathcal M}
\newcommand{\sN}{\mathcal N}
\newcommand{\sZ}{\mathcal Z}

\newcommand{\Tree}{\mathcal T}
\newcommand{\Var}{\operatorname{Var}}
\newcommand{\zeros}{\mathbf 0}

\newtheorem{theorem}{Theorem}

\newtheorem{corollary}{Corollary}

\theoremstyle{definition}
\newtheorem{definition}{Definition}
\newtheorem{remark}{Remark}

\usepackage[colorlinks, citecolor=blue]{hyperref}
  
\begin{document}

\maketitle

\begin{abstract}

  Some applied researchers hesitate to use nonparametric methods, worrying that they will lose power in small samples or overfit the data when simpler models are sufficient. We argue that at least some of these concerns are unfounded when nonparametric models are strongly shrunk toward parametric submodels. We consider expanding a parametric model with a nonparametric component $r(x)$ that is heavily shrunk toward zero. This construction allows the model to adapt automatically: if the parametric model is correct, the nonparametric component disappears, recovering parametric efficiency, while if it is misspecified, the flexible component activates to capture the missing signal. We show that this adaptive behavior follows from simple and general conditions. Specifically, we prove that Bayesian nonparametric models anchored to linear regression, including variants of Gaussian process regression and Bayesian additive regression trees, consistently identify the correct parametric submodel when it holds and give asymptotically efficient inference for regression coefficients. In simulations, we find that the \emph{general BART} model performs identically to correctly specified linear regression when the parametric model holds, and substantially outperforms it when nonlinear effects are present. This suggests a practical paradigm: \emph{defensive model expansion} as a safeguard against model misspecification.

\end{abstract}

\doublespacing

\section{Introduction}

In our experience, non-statistician collaborators are often cautious about using nonparametric methods. Many of their reasons for skepticism are reasonable: ``black-box'' procedures based on machine learning can be hard to interpret and, even when they improve predictions, may not lead to clear scientific understanding or actionable policy recommendations. Another common concern is that nonparametric methods may not justify the extra computational effort or expertise they require, relative to their impact on final conclusions.

In our view, however, some criticisms of nonparametric methods are less well-founded. One persistent worry is that nonparametric methods lose power in small samples, making it risky to use them out of fear that a ``statistically significant'' result might disappear. This need not be true: to the extent that they reduce error variability, nonparametric methods can actually increase power. And even if some power were lost, one could argue that including a nonparametric component provides a more honest reflection of modeling uncertainty. Another frequent complaint is that nonparametric models tend to overfit, and that parsimonious parametric models are safer. While overfitting can occur with poorly regularized nonparametric models, it is not an inherent problem.

Our goal is to show that properly constructed Bayesian nonparametric methods can perform just as well as parametric methods, even in small samples with high signal-to-noise ratios, by using a strategy of \emph{defensive model expansion}. The main idea is to add a flexible adjustment term $r(x)$, with shrinkage toward zero, such that when $r(x) \equiv 0$ we recover the original parametric model. This approach has several advantages:

\begin{itemize}

\item
  By incorporating model selection into the modeling process itself, we avoid relying on an analyst’s ad hoc (and often data-dependent) choices about which interactions or nonlinearities to include. This reduces researcher degrees of freedom and guards against \emph{garden of forking paths} problems at the analysis stage \citep{gelman2013garden}.

\item
  If the parametric model is badly misspecified or omits important signals, expanding into a nonparametric model can improve power. With a flexible ``alternative'' model where $r(x) \ne 0$, we can also use the estimated size of $r(\cdot)$ as a diagnostic for parametric model misspecification: if $|r(x)|$ is large on average then this is evidence that the parametric submodel fits poorly.

\item
  If the nonparametric component turns out to be important, we obtain more honest uncertainty quantification and the chance for richer scientific discoveries. Moreover, if we specify a fully Bayesian model with sensible priors in advance, then we have still only ``used the data once'' in performing inference on these discoveries \citep{woody2021model}.

\item
  When the parametric model is correct, we lose little power. In fact, we will show that we can still obtain asymptotically efficient parameter estimation.

\end{itemize}

The downsides of this strategy are minimal, as we can always shrink the nonparametric component heavily toward the parametric submodel. Often the result will simply be to confirm that the nested parametric model is a good approximation to reality. While this may seem anti-climactic, it is still useful information. Occasionally, the strategy may be too deferential to the parametric submodel, but still detect obvious signs of misspecification; we illustrate this in Section~\ref{sec:applications}, where correctly specified parametric models achieve predictive performance that is matched by nonparametric models. In large samples, a common outcome is that we identify interesting ways to expand the nested parametric model in a Bayes-valid fashion, which we see in Section~\ref{sec:diagnosing}.

\subsection{Existing Literature}

It has long been recognized that one can design Bayesian models that adaptively become more complex when the data demand it. This idea is a core theme in the Bayesian nonparametric literature, although in practice much of the focus has been on clustering and latent feature models \citep{orbanz2011bayesian,gershman2012tutorial,ghahramani2007bayesian}. A difference from the settings we consider is that clustering and latent feature problems are \emph{evidently complex} (with the main question being only \emph{how} complex), whereas we focus on settings that are \emph{plausibly simple}, where a parametric model might in fact be sufficient. See \citet{li2020comparing} for an example of using a baseline nonparametric model as a tool for evaluating parametric models, as we do here.

We briefly describe some other works that have been particularly influential on our approach. \citet{neal1995bayesian} makes the point that, ``There is no provision in the Bayesian framework for changing the model or the prior depending on how much data was collected. If the model and prior are correct for a thousand observations, they are correct for ten observations as well (though the impact of using an incorrect prior might be more serious with fewer observations),'' and that when the underlying phenomenon is evidently complex the appropriate solution to finding a simpler model outperforming a more complex model is ``to design a different complex model that captures whatever aspect of the problem led to the simple model performing well.''

In the context of mixture models, a selling point of infinite mixtures $f(y) = \sum_{k = 1}^\infty \omega_k \, g(y \mid \vartheta_k)$ is that they can collapse to a single parametric model $g(y \mid \theta)$ when most prior mass is placed on the event $\omega_1 \approx 1$. For Dirichlet process mixtures \citep{escobar1995bayesian}, this is accomplished by shrinking the concentration parameter heavily toward zero. For conditional mixture models --- such as dependent Dirichlet processes (DDPs, \citealp{maceachern2000dependent,quintana2022dependent}) --- this allows us to recover standard parametric generalized linear models when appropriate \citep{shahbaba2009nonlinear,hannah2011dirichlet}.

Our strategy also represents a type of \emph{continuous model expansion} in the sense of \citet{gelman2013philosophy}, who recommend ``forming a larger model that includes both A and B as special cases,'' as an alternative to testing Model A against Model B. Like them, we do not recommend formally testing the parametric submodel (except as a theoretical device). While they explicitly do not recommend Bayesian nonparametrics as a tool for continuous model expansion, we also do not feel that our proposal is at odds with the additional use of model checking they recommend, and do not view our use of an additional nonparametric component to be a complete solution to model misspecification. Related works include \citet{kamary2018testing} and \citet{yao2018using}.

\subsection{Outline}

We recommend defensive model expansion partly because, under very general conditions, it yields efficient inference when the associated parametric model is correctly specified. The story is particularly simple if we replace the posterior distribution with a \emph{fractional posterior} \citep{walker2001bayesian,o1995fractional}, which allows us to sidestep the technical testing and entropy conditions required by \citet{ghosal2000convergence}. In fact, the nonparametric component of the prior plays little direct role in our analysis, making the results extremely general. That being said, the main role played by the fractional posterior is to make establishing a posterior contraction rate simpler; the assumption that a fractional posterior is used can often be replaced with directly assuming a posterior concentration rate, for which there are many existing rules for the GP and BART models we consider in this work.

% There are three distinct threads in the paper. First, we study fractional-posteriors for concentration and model selection when the parametric submodel is correctly specified. Second, we study ordinary-posterior semiparametric Bernstein-von Mises results for projection parameters in specific GP settings. Third, we advocate a practical workflow based on fitting the larger model, projecting onto simpler summaries, and then diagnosing any remaining discrepancy. Keeping these threads separate helps clarify which conclusions are general, which are model-specific, and which are intended as practical guidance.

We begin by presenting general results on posterior concentration for nonparametric models when parametric submodels have sufficient prior mass. These are straightforward extensions of existing results for fractional posteriors. We prove results for convergence rates, model selection, and asymptotic normality under the assumption that the parametric model is correctly specified. Our model selection results essentially show that (i) we can correctly select the parametric model and (ii) the posterior distribution satisfies a Bernstein-von Mises theorem.

We also prove two semiparametric Bernstein-von Mises theorems (Theorem~\ref{thm:sbvm} and Theorem~\ref{thm:logistic-bvm}) that apply in settings where model selection consistency is difficult to apply, which guarantee efficient posterior concentration of a \emph{projection parameter} $\theta^\star$ around its true value; when the submodel is correct, this projection parameter will coincide with the parameter of interest. Such projection parameters are often used to summarize posterior distributions \citep{woody2021model}. To the best of our knowledge, Theorem~\ref{thm:logistic-bvm} is not implied by existing results.

We apply these results to Gaussian process (GP) and Bayesian additive regression tree (BART) models. We show that BART models naturally satisfy the conditions required for a Bernstein-von Mises result (see also \citealp{rockova2020semi}), as do ``spike-and-GP'' priors. While we do not study this here, our semiparametric Bernstein-von Mises results also may be extended to projection parameters even when the anchoring parametric model is misspecified. These theoretical results are validated on simulated data.

In addition to being theoretically simple, defensive model expansion is easy to implement in practice. For example, because our convergence rate and model selection results apply directly to BART models, existing software for the \emph{general BART model} of \citet{tan2019bayesian} makes it straightforward to obtain parametric adaptivity with default prior specifications. BART models are especially attractive because, in addition to adapting to parametric structures, they also adjust to locally varying smoothness and the presence of low-order interactions \citep{linero2018abayesian,rovckova2020posterior,jeong2023art}.

\subsection{Notation}

We let $Z_1, Z_2, \ldots$ denote a sequence of random variables taking values in a set $\sZ \subseteq \Reals^Q$ and write $\bZ_N = (Z_1, \ldots, Z_N)$ for the data. We consider a model $\sF = \{F_\theta : \theta \in \Theta\}$ for some parameter space $\Theta$ such that the $F_\theta$'s are mutually dominated by some measure $dz$ with associated densities $f_\theta(z)$. We assume that $Z_i \iid f_{\theta_0}$ for some $\theta_0 \in \Theta$, although we will also briefly discuss the setting where the model is misspecified. For any function $g$, let $\E_\theta \{g(Z)\}$ and $\Var_\theta \{g(Z)\}$ denote expectation and variance under $f_\theta$.

We assume that $\theta$ decomposes as $(r, \eta)$ where $r \in \sH$ is infinite-dimensional and $\eta \in E \subseteq \Reals^P$ is finite-dimensional, such that $\Theta = \sH \times E$. The family $\sF_0 = \{F_\theta : \eta \in E \ \text{and} \ r = r_0\}$ represents a parametric submodel of $\sF$. In our examples, $r(\cdot)$ is a mapping $\Reals^D \to \Reals$ and $r_0(\cdot) \equiv 0$. Intuitively, $\eta$ corresponds to the usual parametric component (e.g., regression coefficients), while $r$ is a flexible nonparametric adjustment. We let $\theta_0 = (r_0, \eta_0)$.

We let $\Pi(\cdot)$ denote a prior on $\Theta$, $L(\theta) = \prod_i f_\theta(Z_i)$ the likelihood function, and $R(\theta) = L(\theta) / L(\theta_0)$ the likelihood ratio. The (fractional) posterior is defined by $\Pi_\alpha(A \mid \bZ_N) = \int_A L(\theta)^\alpha \ \Pi(d\theta) / \int L(\theta)^\alpha \ \Pi(d\theta)$ for $\alpha \in (0,1]$.

\begin{paragraph}{Product Prior Assumption (PPA)}
  Throughout, we mostly assume independent priors on $r$ and $\eta$, i.e., $\Pi = \Pi_r \times \Pi_\eta$, with $\Pi_\eta$ admitting a continuous density $\pi_\eta$ on $E$ such that $\pi_\eta(\eta_0) > 0$. We also assume that $\sH$ is a Hilbert space with inner product $\ip{\cdot}{\cdot}$, and that there exists a norm $\|\cdot\|_{\sH}$ and positive constants $(C(\theta_0), \epsilon)$ such that
  \begin{math}
    \max\{K(\theta_0 \| \theta), V(\theta_0 \| \theta)\} 
    \le C(\theta_0)\{\|\eta - \eta_0\|^2 + \|r - r_0\|_{\sH}^2\}
  \end{math}
  for all $\theta \in \Theta$ satisfying 
  $\|\eta - \eta_0\|^2 + \|r - r_0\|_{\sH}^2 \le \epsilon^2/C(\theta_0)$.
  Here $K(\theta_0 \| \theta)$ and $V(\theta_0 \| \theta)$ are Kullback–Leibler type divergences, as defined in Section~\ref{sec:theory}. Different examples where we apply PPA will use different norms and inner products.
\end{paragraph}

\vspace{1em}

We now describe three concrete models where our results apply. We will focus mainly on Model NR for clarity.

\begin{paragraph}{Nonparametric Regression (Model NR)}
  Let $Z_i = (X_i, Y_i)$ with
  \begin{math}
    [Y_i \mid X_i = x] \sim \Normal\{x^\top\beta_0 + r_0(x), \sigma_0^2\}, 
    \text{ } X_i \sim F_X,
  \end{math}
  where $\theta = (r, \beta, \sigma)$ and $X_i \in [0,1]^D$ for some $D$. Anchoring to the parametric submodel with $r_0(x) \equiv 0$ recovers ordinary linear regression. Our goals are to estimate the regression function $\mu_0(x) = r_0(x) + x^\top\beta_0$ and, ideally, recover $\beta_0$ efficiently.
\end{paragraph}

\begin{paragraph}{Nonparametric Classification (Model NC)}
  We adopt the same framework as Model NR, except $Y_i \in \{0,1\}$ and $\E_{\theta} (Y_i \mid X_i = x) = \mu(x) = (1 + \exp\{-x^\top\beta - r(x)\})^{-1}$ so that $[Y_i \mid X_i = x]$ is a logistic regression and $\theta = (r, \beta)$. Our goal again is to estimate $\mu_0(x) = \E_{\theta_0}(Y_i \mid X_i = x)$ and estimate $\beta_0$ efficiently.
\end{paragraph}

% \begin{paragraph}{Nonparametric Classification (Model NC)}
%   Here $Y_i \in \{0,1\}$ and
%   \[
%     \E_{\theta}(Y_i \mid X_i = x) = \mu(x) 
%     = \frac{1}{1 + \exp\{x^\top\beta + r(x)\}},
%   \]
%   so that $[Y_i \mid X_i = x]$ follows logistic regression with an additional nonparametric component $r(x)$. The parameter is $\theta = (r, \beta)$, and our goal again is to estimate $\mu_0(x)$ and $\beta_0$ efficiently.
% \end{paragraph}

\begin{paragraph}{Density Regression (Model DR)}
  Similar to Model NR, assume $[Y_i \mid X_i = x] \sim \Normal(x^\top\beta_0, \sigma_0^2)$ with $\theta = (r, \beta, \sigma)$, but rather than using $r(x)$ to adjust the regression function we set
  \begin{math}
    f_\theta(y \mid x) 
    = \frac{\Phi\{r(x, y)\} \phi(y \mid x^\top\beta, \sigma)}
           {\int \Phi\{r(x, y)\} \phi(y \mid x^\top \beta, \sigma) \ dy}
  \end{math}
  where $\phi(y \mid \mu, \sigma)$ is the $\Normal(\mu, \sigma^2)$ density and $\Phi(\cdot)$ is the $\Normal(0,1)$ distribution function. Once again, $\theta_0 = (r_0, \beta_0, \sigma_0)$ with $r_0 \equiv 0$. When $r(x, y)$ is a Gaussian process, this model is a conditional version of the \emph{Gaussian process density sample} \citep{murray2008gaussian}, which is an extremely flexible model for a conditional density. It was shown by \citet{li2023adaptive} that when $r(\cdot)$ is given a smooth BART (SBART) prior, the posterior concentrates at a near-minimax rate about $f_{\theta_0}$ if $r_0(x, y)$ is \Holder-smooth and other assumptions are satisfied. \citet{linero2022bayesian} use this strategy in the context of Bayesian nonparametric survival analysis. Complementing the adaptive concentration results provided in those works, the results here also establish that these models can adapt to the underlying parametric regression model being correctly specified.
\end{paragraph}

\section{Modeling Strategies}

We now describe strategies for modeling the nonparametric component $r(\cdot)$ and for assessing how much the nested parametric submodel may be misspecified. We focus on two widely used classes of Bayesian models: Bayesian additive regression trees (BART) and Gaussian processes (GPs). 

\subsection{General Bayesian Additive Regression Trees Models}
\label{sec:general-bart}

For routine use, we recommend the Bayesian Additive Regression Trees (BART, \citealp{chipman2010bart}) prior for the nonparametric component $r(\cdot)$. In the context of Model NR, the regression function is $\mu(x) = x^\top\beta + \sum_{t = 1}^T g(x; \Tree_t, \sM_t)$ where each $g(x; \Tree_t, \sM_t)$ is a regression tree, $\Tree_t$ denotes the tree structure, and $\sM_t = \{\lambda_{\ell t} : \ell=1,\ldots,L_t\}$ is the set of leaf node predictions. \citet{tan2019bayesian} refer to this as a \emph{general BART} (GBART) model, and it can be fit using the \texttt{SoftBart} package \citep{linero2022softbart} using the \texttt{gsoftbart\_regression} function. A schematic of the construction of the model is given in Figure~\ref{fig:treefig}, which shows a single tree from the ensemble and the associated step function.
 
\begin{figure}[t]
  \centering
  \includegraphics[width=0.8\textwidth]{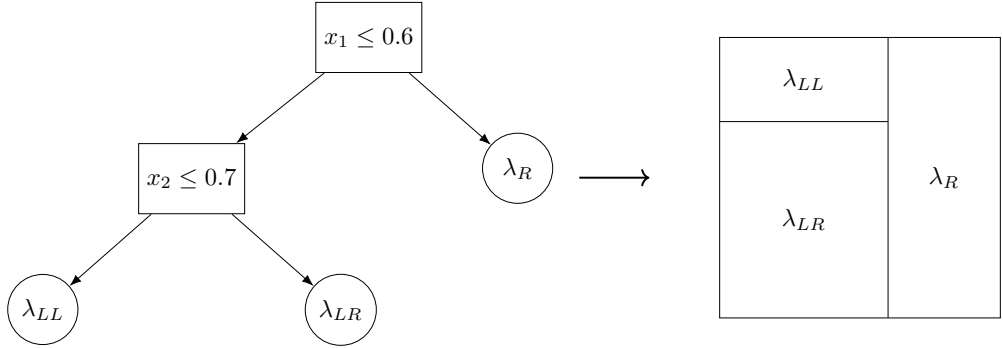}
  \caption{Left: an example of a regression tree with input $x = (x_1, x_2)$ supported on $[0,1]^2$. Right: the induced step function on $[0,1]^2$ \label{fig:treefig}.}
\end{figure}

We now review the prior specification of BART, which is important for our purposes. The default recommendation of \citet{chipman2010bart} takes $\lambda_{t\ell} \sim \Normal(0, \sigma^2_\mu / T)$ with $\sigma_\mu = 0.5$ after centering and scaling the $Y_i$'s to lie between -0.5 and 0.5.
% have empirical mean $0$ and variance $1$.
The scaling of $\sigma^2_\mu$ by $T$ is done so that $\Var\{\sum_{t=1}^T g(x; \Tree_t, \sM_t)\} = \sigma^2_\mu$, which in turn leads to a Gaussian process limit as $T \to \infty$ \citep{linero2017review}. One way to shrink the GBART model to a linear regression is to place a prior on $\sigma_\mu$ supported near $0$, such as a half-normal or half-Cauchy prior.

Another way for GBART to reduce to a linear regression is for all of the trees in the ensemble to be ``empty'' in the sense of not splitting on any variables. In Section~\ref{sec:theory-bart} we will show that this suffices for GBART to possess strong theoretical properties when the linear model is correct. \citet{chipman2010bart} propose a branching process prior for $\Tree_t$ that (i) decides whether each node at a given depth of the tree will be converted into a branch node with two leaf children; (ii) iterates over the depths; and (iii) terminates if all of the nodes at a given depth are leaf nodes. The probability of a node at depth $d$ being a branch node under the prior is specified to be $q(d) = a / (1 + b)^d$ where by default $a = 0.95$ and $b = 2$. In particular, the probability that a tree consists of \emph{no splits} is $(1 - a)$, and the prior probability that the GBART model reduces to the linear model (with intercept offset by the constant $r(0)$) is therefore $(1 - a)^T > 0$.

\subsection{Gaussian Process Regression}
\label{sec:gaussian-process-regression}

A common approach for anchoring Model NR to a parametric submodel is to use the \emph{Gaussian process regression} (GP regression) model. GPs are distributions on random functions, and can be used as priors for regression functions $r(\cdot)$. A GP is determined by a mean function $m(x)$ and a positive semi-definite kernel $\kappa(x,x')$, with $\E_\Pi r(x) = m(x)$ and $\Cov_\Pi\{r(x), r(x')\} = \kappa(x, x')$. We write $\GP(m, \kappa)$ for the associated Gaussian process. For a review of Gaussian processes and alternative choices of $\kappa(\cdot,\cdot)$, see \citet{rasmussen2006gaussian}. Importantly, GPs are conjugate priors for $\mu(x)$ in the model $[Y_i \mid X_i = x] \sim \Normal\{\mu(x), \sigma^2\}$ when $\kappa(x, x')$ is held fixed.

For model NR, we take $r \sim \GP\{0, \sigma^2_\mu \, \kappa(\cdot, \cdot)\}$. Equivalently, letting $\mu(x) = x^\top\beta + r(x)$ we have $\mu \sim \GP(m, \sigma^2_\mu \, \kappa)$ where $m(x) = x^\top\beta$. If $\beta \sim \Normal(0, \sigma^2_\beta \Identity)$ this simplifies further to $\mu \sim \GP(0, \sigma^2_\mu \kappa + \sigma^2_\beta \kappa_{\text{linear}})$ where $\kappa_{\text{linear}}(x,x') = x^\top x'$. In our illustrations we use the squared exponential function $\kappa_{\text{SE}}(x, x') = \exp(-\rho \|x - x'\|^2)$, with the lengthscale $\rho$ controlling the smoothness of samples from the prior.

Just as with BART, there are multiple ways to obtain adaptation to parametric structures. Below, we describe some strategies for ensuring that a GP is capable of attaining adaptivity to $r_0(x) = 0$.

\paragraph{Scale Mixture Models}
One option is to place a hyperprior on $\sigma_\mu$, with heavy shrinkage toward zero. We call this the \emph{scale mixture of GPs} approach. Intuitively, this lets the model collapse to the parametric submodel when the nonparametric adjustment is unnecessary by taking $\sigma_\mu \approx 0$. When performing inference on projection parameters (see Section~\ref{sec:summarizing}) this strategy is sufficient for most purposes.

\paragraph{Spike and Slab Models}
A second option is to use a ``spike-and-slab'' prior \citep{george1993variable} with a GP slab distribution:
\begin{equation*}
  r \sim p_0 \, \delta_{r_0} + (1 - p_0) \, \GP(m, \kappa),
\end{equation*}
where $\delta_{r_0}$ denotes a point mass at $r_0(x) \equiv 0$ and $0 < p_0 < 1$. We refer to this as the \emph{spike-and-GP} approach. If $p_0$ is large then the model strongly favors the parametric submodel, while still allowing flexibility if supported by the data. In Section~\ref{sec:general-results} we will see that spike-and-slab priors adapt to parametric structures with essentially no further requirements. 

\paragraph{Semiparametric Bernstein-von Mises for Projections}
A third strategy is to simply use a GP for $r(x)$ and then perform inference on a projection parameter $\eta^\star$ obtained from projecting the full model onto the parametric submodel $\sF_0$. Concretely, we might compute the \emph{Kullback-Leibler projection} $\theta^\star = \arg \min_{\widetilde theta \in \sF_0} K(\theta \| \widetilde \theta)$ for each sample $\theta$ from the posterior and, because $\theta_0 \in \sF_0$, hopefully $\theta^\star$ will concentrate faster around $\theta_0$ than $\theta$. This can be used with either of the previous two approaches, but sometimes works on its own as well. When the parametric model is correctly specified, appropriately chosen GP priors can yield efficient inference. However, poorly chosen priors may still lead to inefficient or invalid inference. For example, we will see that even if the posterior converges at a near-parametric rate such as $N^{-1/2}(\log N)^t$ for some $t>0$, the bias in estimating $\beta_0$ may decay very slowly, making inference inefficient in finite samples. This occurs, for example, when we incorrectly set $\beta = 0$ and model $r(x)$ with a squared exponential kernel.
% For example, even if the posterior converges at a near-parametric rate such as $N^{-1/2}(\log N)^t$ for some $t>0$, the bias in estimating $\beta_0$ may decay too slowly (slower than $o(N^{-1/2})$), preventing efficient inference. This occurs, for example, when we incorrectly set $\beta = 0$ and model $r(x)$ with a squared exponential kernel.

\subsection{Orthogonalization}

When using models with predictions of the form $X_i^\top \beta + r(X_i)$, such as Model NR or Model NC, it is helpful to require the nonparametric component $r(x)$ to be orthogonal to the linear predictors. That is, we require that $\sum_i r(X_i) \, X_i = \zeros$. Here this is meant primarily as an empirical, finite-sample orthogonalization device based on the observed design matrix rather than as a population-level identifiability assumption. This condition serves two purposes. First, it can substantially improve the mixing of MCMC algorithms used to fit the model. Second, it prevents $r(\cdot)$ from becoming colinear with $X_i^\top \beta$, making it easier to shrink $r(\cdot)$ toward zero when it is unnecessary.
% The related population condition appears later in our theoretical discussion of identifiability.
Orthogonalization is not required for our theoretical results to hold, but it is often useful in practice. For GP models, orthogonalization is implemented by replacing the kernel $\kappa(x,x')$ with the projected kernel
\begin{math}
  \kappa^\star(x,x') = \kappa(x,x') - k_x^\top \bX (\bX^\top \bK \bX)^{-1} \bX^\top k_{x'},
\end{math}
where $\bK_{ij} = \kappa(X_i, X_j)$, $k_x = (\kappa(x, X_1), \ldots, \kappa(x, X_N))^\top$, and $\bX \in \Reals^{N \times P}$ is the design matrix \citep{plumlee2018orthogonal}.

\subsection{Summarizing Nonparametric Fits}
\label{sec:summarizing}

Our proposed workflow for using these models, illustrated in Section~\ref{sec:diagnosing}, is to:
% \begin{enumerate}[leftmargin=0em]
\begin{enumerate}
\item Obtain posterior samples from the larger nonparametric model.
\item Project these samples onto the parametric submodel $\sF_0$ and assess its adequacy.
\item If the parametric model is inadequate with high posterior probability, fit additional summaries (such as additive models or tree-based fits) to identify specific aspects of the misspecification, and use these to refine $\sF_0$.
\end{enumerate}
One practical implementation is to form a linear projection summary by minimizing
\begin{math}
  % \label{eq:sse}
  \text{SSE} = \sum_{i=1}^N \{\mu(X_i) - X_i^\top \beta^\star\}^2.
\end{math}
We then compute
\begin{math}
  R^2 = 1 - \frac{\text{SSE}}{\sum_i \{\mu(X_i) - \bar \mu\}^2}, 
  \text{ where } 
  \bar{\mu} = N^{-1}\sum_i \mu(X_i).
\end{math}
We can interpret $R^2$ as the proportion of the overall signal that is explained by the parametric model. If $R^2$ is sufficiently close to 1, we conclude that the parametric model is adequate. The required closeness depends on context: if prediction is the main goal, we might demand $R^2$ very close to 1, while if the goal is only a broad scientific conclusion, a more modest $R^2$ could suffice. As a reporting convention, we recommend presenting the posterior mean and posterior distribution of $R^2$; for many descriptive purposes, values above roughly $0.99$ indicate an essentially interchangeable summary, values around $0.95$ indicate a close but imperfect summary, and materially smaller values suggest that richer structure may matter. If the linear projection is inadequate, we expand to an additive model $\mu(x) \approx \sum_j \gamma_j(x_j)$ and, if that is still insufficient, apply a CART fit to residuals to detect missing interaction terms. For logistic regression, it is natural instead to minimize the Kullback–Leibler divergence between the fitted probabilities $\mu(X_i)$ and a logistic regression model with predictor $X_i^\top \beta^\star$. 

Importantly, we do not recommend formal hypothesis testing of whether $r = r_0$. First, we may lack power to detect $r = r_0$ even if the model technically allows for it. Second, even when $r \ne r_0$ with high posterior probability, the parametric submodel may still be practically indistinguishable from the nonparametric model in terms of its predictions; this can occur in the GBART model, for example, when a shrinkage prior is placed on $\sigma_\mu$.

A criticism of the above strategy is that it is not specific enough, requiring subjective judgments about whether a summary is ``adequate'' or not. On the contrary, we feel that this is a strength, as it forces the analyst to engage with the question of what purpose the fitted model is supposed to serve and whether it is sufficient for that purpose. Judgments, such as what an appropriate tradeoff between interpretability and predictive accuracy are, should chiefly be determined in collaboration with stakeholders.

\section{Theoretical Results}
\label{sec:theory}

We now establish some basic theoretical properties for the BART and GP models. The results in this section split into two parts: general fractional-posterior concentration and model-selection results under correct parametric specification, and ordinary-posterior semiparametric Bernstein-von Mises results for projection parameters. To formalize the idea that a properly anchored Bayesian nonparametric model can perform ``just as well'' as a parametric submodel, we will use arguments based on the \emph{rate of convergence} of a posterior distribution; the rate can be used as a justification on its own, or as part of the assumptions for stronger justifications such as model selection consistency or a Bernstein-von Mises theorem. We will say that the posterior convergence rate is (upper bounded by) $\epsilon_N$ if we have $\Pi_\alpha\{H(\theta_0, \theta) > M \, \epsilon_N \mid \bZ_N\} \to 0$ in probability for some $M > 0$. Here, $H(\theta_0, \theta) = [\int \{\sqrt{f_\theta(z)} - \sqrt{f_{\theta_0}(z)}\}^2 \ dz]^{1/2}$ denotes the Hellinger distance between $f_\theta$ and $f_{\theta_0}$. For example, we will show that appropriate GBART and spike-and-GP priors give $\epsilon_N = N^{-1/2}$ rates (possibly up-to a logarithmic term) and that posterior inferences for $\beta_0$ are efficient, provided that the parametric submodel is correct.

The most important condition governing whether parametric adaptivity is possible is whether or not the prior assigns enough mass to neighborhoods of $\theta_0$. In nonparametric setups, the relevant neighborhoods are defined in terms of the Kullback-Leibler (KL) neighborhoods:
\begin{equation*}
  \theta \in B_\epsilon
  \quad \text{if} \quad
  K(\theta_0 \| \theta)
  \stackrel{\text{def}}{=}
  \E_{\theta_0} \log \frac{f_{\theta_0}(Z)}{f_\theta(Z)} \le \epsilon^2
  \quad \text{and} \quad
  V(\theta_0 \| \theta)
  \stackrel{\text{def}}{=}
  \Var_{\theta_0} \log \frac{f_{\theta_0}(Z)}{f_\theta(Z)} \le \epsilon^2.
\end{equation*}
We expect to be able to estimate $\theta_0$ at a rate $\epsilon_N$ if the following prior thickness condition holds:
\begin{definition}[$\epsilon_N$-thickness]
  The prior $\Pi$ is said to be $\epsilon_N$-thick at $\theta_0$ if there exist $C_1, C_2 > 0$ such that
  \[
    \Pi(B_{\epsilon_N}) \ge C_1 \, \exp(-C_2 N \epsilon_N^2).
  \]
\end{definition}
We simplify our arguments by restricting attention to the \emph{$\alpha$-fractional posterior} given by
\begin{equation}
  \label{eq:bayes}
  \Pi_\alpha(d\theta \mid \bZ_N)
  =
  \frac{L_N(\theta)^\alpha \ \Pi(d\theta)}{\int L_N(\theta)^\alpha \ \Pi(d\theta)},
  \qquad \alpha \in (0,1),
\end{equation}
where $L_N(\theta) = \prod_{i = 1}^N f_{\theta}(Z_i)$ is the likelihood; extending to $\alpha = 1$ (i.e., the genuine posterior) is possible for the models we consider here under additional conditions on the prior. Works that have studied the properties of fractional posteriors include \citet{o1995fractional, walker2001bayesian,bhattacharya2019bayesian,martin2019data}. Under essentially no further assumptions, the prior thickness condition is sufficient to guarantee posterior concentration at the rate $\epsilon_N$. Section 2.2 of \citet{castillo2024bayesian} is particularly instructive for understanding why these conditions are sufficient and how the arguments can be extended to non-iid settings.

\begin{paragraph}{Interpretation of $\alpha$}
  The role of $\alpha$ can be understood in two ways. First, it is as if we ``hold out'' $100(1-\alpha)\%$ of the data, which makes the posterior less prone to overfitting \citep{o1995fractional}. Second, as noted by \citet{walker2001bayesian}, the fractional posterior is equivalent to using the full likelihood $L_N(\theta)$ with a data-adaptive prior $\Pi^\star(d\theta) \propto \Pi(d\theta) / L_N(\theta)^{1-\alpha}$. This construction downweights parameter values that fit the observed data too well, again limiting overfitting. This robustness to overfitting is what allows us to avoid conditions on the size of the effective support used, for example, by \citet{ghosal2000convergence} to study the usual posterior.
\end{paragraph}

% \begin{paragraph}{Identifiability}
%   Prior thickness ensures convergence in Hellinger distance, but does not automatically imply efficient estimation of $\eta_0$, since $r(\cdot)$ may be confounded with $\eta$. For example, in Models~NR and~NC, $\mu(\cdot) \to \mu_0(\cdot)$ does not necessarily imply $\beta \to \beta_0$. One way to ensure recovery of $\eta_0$ is through a \emph{metric identifiability} condition such as
%   \begin{math}
%     \|\eta - \eta_0\| \lesssim H(\theta, \theta_0),
%   \end{math}
%   which forces $H(\theta,\theta_0) \to 0$ only if $\eta \to \eta_0$ at the same rate. In Model~NR this can be obtained by imposing $\E_\theta\{X_i r(X_i)\} = \zeros$ and assuming $\E(X_i \, X_i^\top) = \Sigma_X$ is full-rank. Alternatively, consistent selection of the event $[r = r_0]$ also guarantees efficient inference for $\eta_0$.
% \end{paragraph}

\begin{paragraph}{Non-iid Data and Finite Sample Bounds}
  It is possible to extend our results to allow for non-iid data by instead defining $B_\epsilon$ in terms of $\E_{\theta_0}  \left(\log \frac{f(Z_1, \ldots, Z_N \mid \theta_0)}{f(Z_1, \ldots, Z_N \mid \theta)}  \right)^k \le N \, \epsilon_N^2$ for $k \in \{1,2\}$, and similarly redefining the Hellinger distance. Another advantage of fractional posteriors is that it is also possible to convert convergence rate arguments into non-asymptotic arguments and allow for model misspecification, as done by \citet{bhattacharya2019bayesian}.
\end{paragraph}

\begin{paragraph}{Non-Fractional Posteriors}
  The usual posterior ($\alpha=1$) requires additional technical work that we feel distracts from the message of this paper. In this case, prior thickness alone is not enough; one must also control the complexity of the support of the prior, often by placing high mass on a low-entropy \emph{sieve} \citep{ghosal2000convergence}.
  % Prior support conditions serve to stop overfitting to the data.
  While this strategy works, it places much stronger restrictions on the choice of $\Pi_r(dr)$. Nevertheless, we note that the specific semiparametric Bernstein-von Mises results given in Theorem~\ref{thm:bart-bvm}, Theorem~\ref{thm:sbvm}, and Theorem~\ref{thm:logistic-bvm} happen to be true for genuine posteriors; the fractional posterior only makes establishing the required rate of convergence easier, but convergence rate results for the non-fractional posterior are widely available for BART and GP models.
  % Because these conditions are cumbersome to verify and the rates are marginally worse than with fractional posteriors, we focus here on the $\alpha$-fractional case.
  % to the extent that the results we present fail for $\alpha = 1$, it is only just barely. In addition to requiring prior thickness, the main strategy for dealing with the usual posterior is to assume that the support of the prior is sufficiently small so that the posterior is incapable of overfitting the data; this is usually formalized in terms of the prior putting sufficient mass on a low-entropy sieve \citep{ghosal2000convergence}. In addition to being more technically challenging, the primary reason we do not do this is that (i) the need to verify the sieve conditions makes the choice of $\Pi_r(dr)$ much more important and (ii) the use of sieves can add a $(\log N)^t$ factor to the rates that can be cumbersome to remove. 
\end{paragraph}

\subsection{General Results}
\label{sec:general-results}

We start with general theoretical results that will later be applied to BART and GP regression models. Throughout this section (and Sections~\ref{sec:theory-bart} and~\ref{sec:theory-gps}) we fix $\alpha \in (0,1)$. Our first result is a general fractional posterior concentration theorem, implied for example by Theorem~2.1 of \citet{castillo2024bayesian}. It establishes a near-parametric rate of convergence under essentially only a prior thickness condition.

\begin{theorem}[General Rates]
  \label{thm:general-1}
  Suppose that PPA holds and that there exist constants $C_1, C_2, N^\star$ such that $\Pi_r(\|r - r_0\|_{\sH} \le M_N N^{-1/2}) \ge C_1 e^{-C_2 M_N^2}$ where $M_N / \sqrt{\log N} \to \infty$. Then we have $\E_{\theta_0} \Pi_\alpha\{H(\theta_0, \theta) > K \, M_N \, N^{-1/2} \mid \bZ_N\} \to 0$ as $N \to \infty$ for some sufficiently large $K$.
  % Suppose that PPA holds and that for all sequences $M_N \uparrow \infty$ with $M_N \gg \sqrt{\log N}$ there exist positive constants $C_1, C_2, N^\star$ such that $\Pi_r(\|r - r_0\|_{\sH} \le M_N N^{-1/2}) \ge C_1 \, e^{-C_2 \, M_N^2}$ for $N \ge N^\star$. Then for all sequences $M_N \uparrow \infty$ with $M_N \gg \sqrt{\log N}$ 
\end{theorem}

It is easy to design priors that satisfy the conditions of Theorem~\ref{thm:general-1}. In fact, any prior that assigns positive mass to the event $[r = r_0]$ will suffice. Spike-and-slab priors \citep{george1993variable} of the form $\Pi_r(dr) = p_0 \, \delta_{r_0}(dr) + (1 - p_0) \, \Pi^\star_r(dr)$ (with $\delta_{r_0}$ denoting a point mass distribution at $r_0$) are natural choices.

\begin{corollary}[Rate for Spike-and-Slab]
  \label{cor:general-1}
  Suppose that PPA holds, that $\Pi_r(r = r_0) > 0$, and that $M_N / \sqrt{\log N} \to \infty$. Then we have $\E_{\theta_0} \Pi_\alpha\{H(\theta_0, \theta) \ge M_N \, N^{-1/2} \mid \bZ_N\} \to 0$ as $N \to \infty$.
\end{corollary}

Anchoring to the parametric model also enables consistent model selection. If the parametric submodel holds and the nonparametric part cannot achieve a near-parametric rate on its own, then the posterior concentrates all its mass on the parametric model.  

\begin{theorem}[General Model Selection]
  \label{thm:general-consistency}
  Suppose that PPA holds and that $\Pi_r(dr) = p_0 \delta_{r_0}(dr) + (1 - p_0) \, \Pi_r^\star(dr)$ with $0 < p_0 < 1$. Define
  \begin{align*}
    \Pi^\star_\alpha(A \mid \bZ_N)
    = \frac{\int_A L(\theta)^\alpha \ \Pi_r^\star(dr) \ \Pi_\eta(d\eta)}
           {\int L(\theta)^\alpha \ \Pi_r^\star(dr) \ \Pi_\eta(d\eta)},
  \end{align*}
  and suppose that $\E_{\theta_0} \Pi^\star_\alpha\{H(\theta_0, \theta) < M_N \, N^{-1/2} \mid \bZ_N\} \to 0$ where $M_N / \sqrt{\log N} \to \infty$ as $N \to \infty$. Then $\E_{\theta_0} \Pi_\alpha(r = r_0 \mid \bZ_N) \to 1$ as $N \to \infty$.
\end{theorem}

\begin{remark}
  The assumption that $\Pi^\star_\alpha(\cdot\mid\bZ_N)$ fails to achieve a near-parametric rate is stronger than strictly necessary, as fractional Bayes factors are typically consistent for nested parametric model selection \citep{o1995fractional} (where the rate is $N^{-1/2}$ for both models). We discuss settings where it is obtained for GP priors in Section~\ref{sec:theory-gps}, where we note that some GP priors \emph{can} obtain near-parametric rates; these models fit more naturally with the posterior projection approach described in Section~\ref{sec:theory-gps} in terms of analysis, as a near-parametric rate is useful for proving a Bernstein-von Mises result. 
\end{remark}

When Theorem~\ref{thm:general-consistency} is combined with a parametric submodel $\sF_0$ that is differentiable in quadratic mean (see Section~5.5 of \citealp{vandervaart1998asymptotic}), we also obtain a Bernstein-von Mises result. The following theorem formalizes this under standard regularity conditions.

\begin{theorem}[Bernstein-von Mises]
  \label{thm:general-bvm}
  Suppose that the conditions of Theorem~\ref{thm:general-consistency} hold and that Assumption BVM in the Supplementary Material holds. Let $\Pi_\alpha(d\eta \mid \bZ_N)$ denote the marginal distribution of $\eta$ under $\Pi_\alpha(d\theta \mid \bZ_N)$. Then
  \begin{align*}
    \|\Pi_\alpha(d\eta \mid \bZ_N) - \Normal(d\eta \mid \widehat \eta, (\alpha N)^{-1} \Fisher(\eta_0)^{-1})\|_{TV}
    \to 0 \quad
    \text{in $F_{\theta_0}$-probability},
  \end{align*}
  where $\Fisher(\eta)$ denotes the unit Fisher information in the parametric model $\sF_0$, $\|\cdot\|_{TV}$ denotes total variation distance, and $\widehat \eta$ is any estimator such that $\sqrt N(\widehat \eta - \eta_0) \to \Normal(0, \Fisher(\eta_0)^{-1})$ in distribution.
\end{theorem}

The value of Theorem~\ref{thm:general-bvm} relative to the usual Bernstein-von Mises theorem is that it represents an oracle property of the fractional posterior: if the parametric model is correctly specified, our nonparametric prior will nevertheless obtain efficient inference for $\eta_0$. Efficiency holds up-to inflation of the variance by a factor of $\alpha^{-1}$; this can be corrected using the shift-and-rescale approach of \citet{l2023semiparametric}. On the other hand, if the parametric model is misspecified we will also be able to obtain efficient nonparametric estimation rates for $\theta_0$; the exact rates we get under misspecification will depend on the choice of $\Pi^\star(dr)$ and on $r_0(x)$.

\subsection{Results for Bayesian Additive Regression Trees}
\label{sec:theory-bart}

As illustrated by \citet{rovckova2020posterior,jeong2023art,linero2018abayesian}, BART models are adaptive to a wide variety of structures. In Model NR, for example, they are adaptive to sparsity, (locally-varying) smoothness, and low-order interaction structure in $r_0(x)$ out-of-the-box. Adding to this list of desirable properties, appropriately constructed BART models are also adaptive to parametric structures with no additional modifications.

\begin{theorem}[BART Adaptivity]
  \label{thm:bart-adaptivity}
  Suppose that Model NR holds with $r_0(x) = 0$ for all $x \in [0,1]^D$ and suppose that the prior $\Pi(d\theta)$ is such that 
  \begin{itemize}
  \item $(\beta, \sigma)$ has a continuous density $\pi_{\beta, \sigma}$ with $\pi_{\beta, \sigma}(\beta_0, \sigma_0) > 0$.
  \item $r$ is independent of $(\beta, \sigma)$ and has a BART prior as described in Section~\ref{sec:general-bart}.
  \end{itemize}
  Then the conditions of Theorem~\ref{thm:general-1} hold. Additionally, let $\sN$ be the event that all decision trees in the ensemble contain no splitting rules. Then $\E_0 \Pi_\alpha(\sN \mid \bZ_N) \to 1$.
\end{theorem}

This result is close in spirit to Theorem~\ref{thm:general-consistency}. While we cannot directly apply Theorem~\ref{thm:general-bvm}, Theorem~\ref{thm:bart-adaptivity} is strong enough to imply a Bernstein-von Mises result.  

\begin{theorem}[BART Bernstein-von Mises]
  \label{thm:bart-bvm}
  Suppose that the conditions of Theorem~\ref{thm:bart-adaptivity} hold and that $F_X$ is such that $X_{i1} \equiv 1$ and that $\E(X_i \, X_i^\top) = \Sigma_X$ is non-singular. Define the vector $\eta = (\beta_1 + r(0), \beta_2, \beta_3, \ldots, \beta_D, \sigma)^\top$ and let $\widehat \eta$ be the maximum likelihood estimator of $(\beta, \sigma)$ in Model NR when $r_0 \equiv 0$. Then the conclusion of Theorem~\ref{thm:general-bvm} holds.
\end{theorem}

% \begin{remark}
%   Versions of the above results can also be derived for Model NC and Model DR; we only give results for Model NR for brevity.
% \end{remark}

\subsection{Results for Gaussian Processes and Posterior Projections}
\label{sec:theory-gps}

We have discussed two distinct strategies for Gaussian process models. The first is to use a spike-and-slab prior $\Pi(dr) = p_0 \, \delta_{r_0} + (1 - p_0) \, \Pi^\star_r(dr)$ with $\Pi^\star_r(dr)$ a (mixture of) Gaussian processes. The second is to use a scale mixture of Gaussian processes of the form $[r \mid \sigma_r] \sim \GP(0, \kappa)$ where $\kappa(x, x') = \sigma^2_r \, \rho(x,x')$ and $\rho(x,x')$ is a fixed kernel and $\sigma^2_r$ is given a hyperprior. For spike-and-slab priors, Corollary~\ref{cor:general-1} applies immediately. 

% To keep technical details to a minimum, we find it prudent to adopt $\Pi_\alpha\{H(\theta_0, \theta) > M_N \, N^{-1/2} \mid \bZ_N\} \to 0$ in $F_{\theta_0}$-probability as a starting assumption rather than to state conditions for it to hold. This will almost always be the case when $\kappa(x,x')$ is a strictly positive definite kernel such as the squared exponential and \Matern\ kernels.

\begin{remark}[Gaussian Process Lower Bounds]
  Applying Theorem~\ref{thm:general-consistency} directly requires us to have a \emph{lower bound} on the rate of convergence for Gaussian process models. In other words, we need to verify that a GP prior without a spike-and-slab component does not achieve the near-parametric $(\log N / N)^{1/2}$ rate. We might expect flexible nonparametric methods to attain a rate of the form $\epsilon_N = N^{-p/(2p + D)}$ up-to logarithmic terms for some $p > 0$; this is the case for \Matern\ priors (see see Theorem~5 of \citealp{van2011information} and the discussion thereafter). For some common kernels (such as the squared exponential kernel), however, a near-parametric rate is attained (see Theorem 10 of \citealp{van2011information}). This occurs because the squared exponential kernel is supported on extremely smooth functions,  making it particularly well suited to approximating the function $r_0(x) = 0$.
  % This occurs because the reproducing kernel Hilbert space (RKHS) of the squared exponential kernel consists of extremely smooth functions, making it particularly well suited to approximating the function $r_0(x) = 0$.
  The general problem of lower bounding posterior contraction rates for Gaussian process regression models is studied by \citet{castillo2008lower}.
  % who proposes showing that the sieve $\sF_N = \{\theta : H(\theta_0, \theta) > \zeta_N\}$ has sufficiently high mass that the posterior distribution concentrates on this set for some $\zeta_N$. 
\end{remark}

As an alternative to establishing consistent model selection, if the goal is inference on $\beta_0$ it can also suffice to instead project the posterior of $(r, \beta)$ onto the family $\sF_0$. Recall that using projections for inference is what we actually recommend. The following two theorems give results for posterior projections of models based on Gaussian process priors; we state both results for the non-fractional posterior.

\begin{theorem}[Semiparametric Bernstein-von Mises Theorem]
  \label{thm:sbvm}
  Consider Model NR with $\sigma_0 = 1$ and $r_0(x) = 0$. Assume $\Sigma_X = \E_{\theta_0}(X_i X_i^\top)$ exists, is finite, and is non-singular. Let $\Pi(\cdot \mid \bZ_N)$ be the posterior associated to the model with $[Y_i \mid X_i, \mu] \sim \Normal\{\mu(X_i), 1\}$ and $\mu \sim \GP(0, \kappa)$ where $\kappa(x,x') = \sigma^2_\beta \, x^\top x' + a \, e^{-\rho \|x - x'\|^2_2}$ for some positive constants $(\sigma^2_\beta, a, \rho)$. Let $\beta^\star = \arg \min_\beta \|\bmu - \bX \beta\|$ where $\bmu = (\mu(X_1), \ldots, \mu(X_N))^\top$ and $\bX = (X_1, \ldots, X_N)^\top$. Then
  \begin{equation*}
    d\left\{ \sqrt N (\beta^\star - \widehat \beta_{LS}),  \Normal(0, \Sigma_X^{-1})\right\} \to 0
    \quad \text{in $F_{\theta_0}$-probability}
  \end{equation*}
  where $d(\cdot, \cdot)$ metrizes convergence in distribution and $\widehat\beta_{LS}$ is the least-squares estimator of $\beta_0$.
\end{theorem}

\begin{remark}
  \citet{xie2021bayesian} study Model~NR in detail, and also allow for $x^\top\beta$ to be replaced by a nonlinear term $g(x; \beta)$ under appropriate conditions. Additionally, they allow $r_0(x) \ne 0$ and impose only that $r_0(x)$ is sufficiently smooth, with $\Sigma_X^{-1}$ replaced with the semiparametric efficient variance in Theorem~\ref{thm:sbvm}. % As with most semiparametric efficiency results, ``sufficiently smooth'' maps to a condition that implies $r_0(x)$ can be estimated at a rate faster than $N^{-1/4}$; if $r_0(x) \equiv 0$ then this is not a problem for the squared-exponential kernel.
\end{remark}

\begin{remark}
  For a Gaussian process prior with Model NR, the posterior distribution of $\beta^\star$ is normal, and so Theorem~\ref{thm:sbvm} really only concerns the behavior of the posterior mean and variance. We note that Theorem~\ref{thm:sbvm} does not directly rule out that a Bernstein-von Mises result holds without the linear term $\sigma^2_\beta x^\top x'$ in the kernel. In Section~\ref{sec:semiparametric-bernstein-von-mises-for-gaussian-processes} we confirm empirically that Theorem~\ref{thm:sbvm} accurately predicts performance, while kernels without the linear term are empirically highly biased in finite samples.
\end{remark}

To show that this phenomenon is general, we prove an analogous result for logistic regression (Model NC). The proof follows Theorem~2.2 of \citet{l2023semiparametric}, which is itself based on Theorem~2.1 of \citet{castillo2015bernstein}. The main steps of the proof are (i) we assume that $\|r - r_0\|_N = \sqrt{N^{-1} \sum_i \mu_0(X_i) \{1 - \mu_0(X_i)\}\{r_0(X_i) - r(X_i)\}^2}$ has a posterior contraction rate faster than $N^{-1/4}$, which allows us to control the remainder of a Taylor expansion and (ii) a change-of-variables argument under the prior $\beta \sim \Normal(\mu_\beta, \Sigma_\beta)$, similar to the application of the Cameron-Martin theorem by \citet{castillo2015bernstein}. On the technical side, guaranteeing that item (i) holds is difficult due to the fact that $\|r - r_0\|_N$ is not equivalent to the empirical Hellinger distance, and so is not available from existing results for Gaussian process logistic regression; as discussed in the Supplementary Material, this difficulty can be bypassed by working with an appropriately-truncated prior instead. As with the results of \citet{xie2021bayesian}, we expect that this result can be modified with additional assumptions to also accommodate $r_0(x)$ deviating from $x^\top \beta_0$ provided that $\|r - r_0\|_N$ has a posterior contraction rate faster than $N^{-1/4}$; after such a modification, the asymptotic variance would be the semiparametric efficient variance.

\begin{theorem}[Bernstein-von Mises for Logistic Regression]
  \label{thm:logistic-bvm}
  Consider Model~NC with $r_0(x) \equiv 0$ and consider a fixed sequence $X_1, X_2, \ldots$ in $[0,1]^P$ and let $\Ihat = \frac{1}{N} \sum_{i = 1}^N \mu_0(X_i) \{1 - \mu_0(X_i)\} X_i \, X_i^\top$. Suppose $\Ihat$ is invertible and that $\Ihat \to I$ for some positive definite matrix $I$. Let $\Pi(\cdot\mid\bZ_N)$ be the posterior distribution of $(r, \beta)$ associated to the model with $[Y_i \mid X_i,r,\beta] \sim \Bernoulli\{(1+e^{- (X_i^\top \beta + r(X_i))})^{-1}\}$ such that Assumption~A in the Supplementary Material is satisfied. Define $\beta^\star = \Ihat^{-1} N^{-1} \sum_i \mu_0(X_i)\{1-\mu_0(X_i)\}X_i \{r(X_i) + X_i^\top\beta\}$ where $\Ihat = N^{-1} \sum_i \mu_0(X_i) \{1 - \mu_0(X_i)\} X_i \, X_i^\top$. Then
  \begin{equation*}
    d\left\{\Pi\left(\sqrt N (\beta^\star - \widehat \beta) \in \cdot \mid \bZ_N\right), \Normal(0, I^{-1})\right\}
    \to 0 \quad \text{in $F_{\theta_0}$-probability}
  \end{equation*}
  where $d(\cdot, \cdot)$ metrizes convergence in distribution and $\widehat \beta = \beta_0 + N^{-1} \Ihat^{-1} \sum_{i = 1}^N \{Y_i - \mu_0(X_i)\} X_i$.
\end{theorem}

Theorem~\ref{thm:logistic-bvm} shows that a certain ``information-weighted'' projection $\beta^\star$ satisfies a semiparametric Bernstein-von Mises theorem with $I$ the efficient Fisher information under $\sF_0$. The quantity $\widehat\beta$ is asymptotically equivalent to the MLE under the parametric logistic regression submodel, being a one-step Newton–Raphson update of a $\sqrt N$-consistent quantity. We use the information-weighted projection mainly for technical convenience, as it makes $\beta^\star$ a linear functional of $\theta(x) = r(x) + x^\top\beta$. In practice, it is much better to use the Kullback-Leibler projection $\beta^\star = \arg \min_{\beta} \sum_i \mu(X_i) \log \frac{\mu(X_i) }{\mu_\beta(X_i)} + \{1 - \mu(X_i)\} \log \frac{1 - \mu(X_i)}{1 - \mu_\beta(X_i)}$ where $\mu_\beta(x) = (1 + e^{-x^\top\beta})^{-1}$, as in experiments we have found it to be far less biased.

\section{Applications}
\label{sec:applications}

\subsection{Rate Adaptivity of General BART}
\label{sec:rate-adaptivity-of-gbart}

We now show empirically that the GBART model satisfies our rate adaptivity result. In particular (i) it performs just as well as a correctly specified linear regression model when the linear model holds, and (ii) it performs as well as BART when the linear model is misspecified.
We consider the data generating mechanism
\begin{equation*}
  Y_i = \frac{1}{\sqrt P} \sum_{j = 1}^P X_{ij} + \lambda_0 \, X_{i1}^2 + \epsilon_i,
  \quad \epsilon_i \sim \Normal(0, \sigma_0^2),
\end{equation*}
with $X_{ij} \sim \iid \Normal(0,1)$. When $\lambda_0=0$ the model is exactly linear, while $\lambda_0 \ne 0$ introduces a quadratic component. We consider $\sigma_0 \in \{1,3,5\}$ and $\lambda_0 \in \{0,0.4\}$. For each $N \in \{2^6,2^7,\ldots,2^{12}\}$ we simulate five datasets and compare predictive accuracy of the following models:
\begin{itemize}
\item \textbf{BART:} the original BART model of \citet{chipman2010bart} with its default prior.  
\item \textbf{GBART:} the general BART model with its default prior and a flat prior on $\beta$.  
\item \textbf{Linear:} a parametric linear regression fit via least squares.  
\end{itemize}

\paragraph{Performance Metrics} Methods are compared by mean squared error (MSE) in estimating $\mu_0(x)=\E_{\theta_0}(Y_i \mid X_i=x)$. The MSE is $\frac{1}{N}\sum_i \{\mu_0(X_i)-\widehat\mu(X_i)\}^2$, where $\widehat\mu(x)$ is the posterior mean of $\mu(x)$. 

\begin{figure}
  \centering
  \includegraphics[width=1\textwidth]{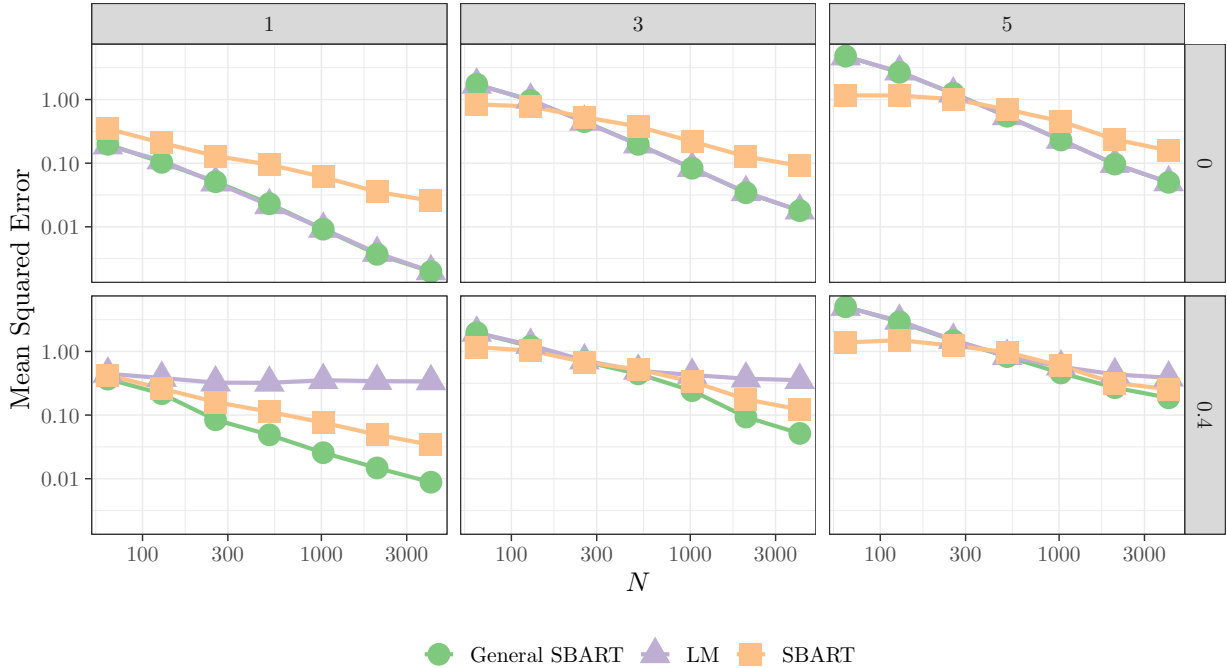}
  \caption{
    Results for the simulation experiment in Section~\ref{sec:rate-adaptivity-of-gbart}. From left to right, we have $\sigma_0 = 1, 3, 5$ and from top to bottom we have $\lambda_0 = 0, 0.4$. MSE and $N$ are displayed on the log-scale.
  }
  \label{fig:results-bart}
\end{figure}

\paragraph{Results} Figure~\ref{fig:results-bart} shows the results. We see that when $\lambda_0 = 0$, the performance of GBART is indistinguishable from that of the linear model at all sample sizes. Somewhat surprisingly, for smaller signal-to-noise ratios and sample sizes the unmodified BART model outperforms a correctly-specified linear model; this is because the sample size is not large enough to estimate the coefficients reliably, giving an advantage to models that regularize the predictions toward zero. This defect in the linear model can be mitigated by replacing the least squares estimates with a ridge regression or lasso.

When $\lambda_0=0.4$, GBART outperforms the other methods, particularly at large sample sizes. Under these simulation settings, it is strictly superior to the linear model under all data generating mechanisms. We again see that when both the sample size and the signal-to-noise ratios are small there is an advantage to using a BART model that more strongly regularizes the predictions.

\subsection{Adaptivity and Selection of a Gaussian Process}

We also confirm empirically that the spike-and-GP prior can consistently select the correct parametric submodel. We use the same simulation design as in Section~\ref{sec:rate-adaptivity-of-gbart}. For the prior on $r(x)$ we take $r \sim \GP(0, \sigma_\mu^2 \kappa)$ with $\sigma_\mu^2 \sim p_0 \, \delta_0 + (1-p_0)\InvGam(a_{\sigma_\mu}, b_{\sigma_\mu})$ and $\kappa(x,x') = \exp\{-\rho \|x-x'\|^2\}$ with $\rho \sim \InvGam(a_\rho, b_\rho)$. In our illustrations we set $a_{\sigma_\mu} = b_{\sigma_\mu} = a_\rho = b_\rho = 1$. The inverse-gamma prior was chosen because it puts little mass near zero, so that the model is genuinely nonparametric when $r(x) \ne 0$. We set $p_0=1/2$ to express prior indifference between parametric and nonparametric models, though in practice one might prefer a smaller $p_0$.

\paragraph{Performance Metrics} We focus on the posterior probability assigned to the nonparametric model, $\Pi(r \ne 0 \mid \bZ_N)$. When $\lambda_0=0$ (linear model), we want $\Pi(r \ne 0 \mid \bZ_N) \to 0$, while when $\lambda_0 \ne 0$ we want $\Pi(r \ne 0 \mid \bZ_N) \to 1$.

\begin{figure}
  \centering
  \includegraphics[width=1\textwidth]{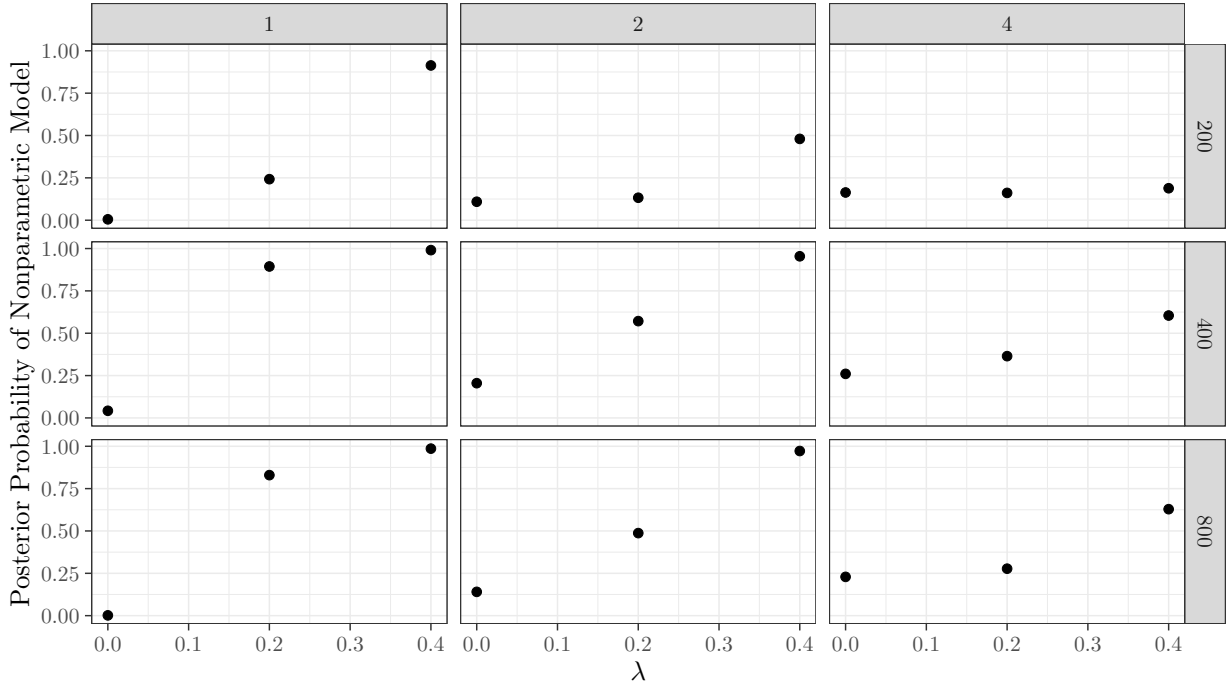}
  \caption{Posterior inclusion probabilities for the Gaussian process model under various values of $\lambda_0$. From left to right: $\sigma_0 \in \{1,2,4\}$. From top to bottom: $N \in \{200,400,800\}$.}
  \label{fig:gp-results}
\end{figure}

\paragraph{Results} Figure~\ref{fig:gp-results} shows that the spike-and-GP prior behaves as expected. When $\lambda_0=0$ the parametric model is consistently selected, while as $\lambda_0$ increases the nonparametric model is selected with high probability. The trend is more muted when $\sigma_0$ is larger; this is intuitive, since (for example) $\sigma_0 = 4$ corresponds to a signal-to-noise ratio of roughly 1-to-16, while $\sigma_0=1$ corresponds to a signal-to-noise ratio of roughly 1-to-1.

As a general lesson, these results suggest that projection-based inference is often preferable to relying solely on model selection consistency. Projections allow us to quantify the loss from using the parametric approximation directly, which can be more informative than a binary model selection outcome. We examine this in more detail in Section~\ref{sec:semiparametric-bernstein-von-mises-for-gaussian-processes}.
\subsection{Semiparametric Bernstein-von Mises for Gaussian Processes}
\label{sec:semiparametric-bernstein-von-mises-for-gaussian-processes}

We now empirically validate the Bernstein-von Mises result for the projection of the GP posterior onto the linear regression model. We use the data generating process
\begin{equation*}
  Y_i = \sum_{j = 1}^P \beta_{0j} X_{ij} + \epsilon_i,
\end{equation*}
with $X_{ij} \sim \Normal(0,1)$ and $\epsilon_i \sim \Normal(0,1)$ independently. We set $P=5$ and $\beta_{0j} \equiv 1.55$. We then fit a GP regression model $Y_i = \mu(X_i) + \epsilon_i$ with $\mu \sim \GP(0,\kappa)$ and compute the posterior distribution of the projection parameter $\beta^\star = (\bX^\top \bX)^{-1}\bX^\top\bmu$. We consider the Laplace kernel $\kappa(x, x') = e^{-\|x - x'\|}$, the squared exponential kernel $\kappa(x,x') = e^{-\|x - x'\|^2}$, and the squared-exponential-plus-linear kernel $\kappa(x,x') = 100 \, (x')^\top x + e^{-\|x - x'\|_2^2}$.

\begin{figure}
  \centering
  \includegraphics[width=1\textwidth]{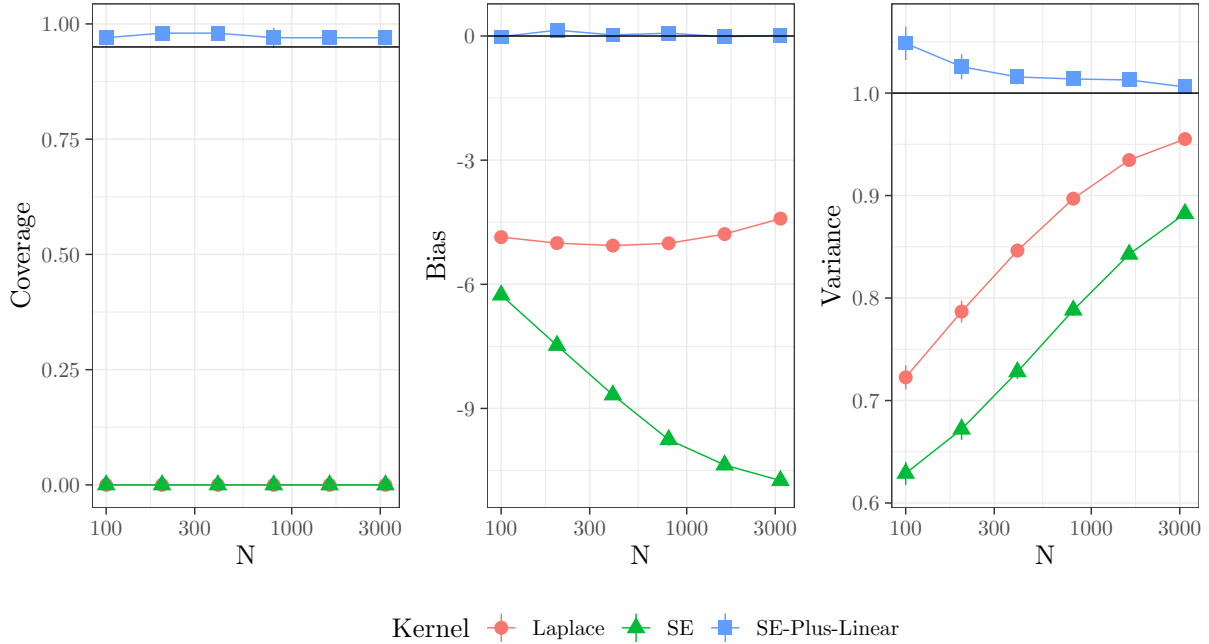}
  \caption{Results for the experiments in Section~\ref{sec:semiparametric-bernstein-von-mises-for-gaussian-processes}.}
  \label{fig:gp-bvm}
\end{figure}

\paragraph{Performance Metrics}
We monitor three quantities: (i) the nominal coverage of 95\% credible intervals for $\beta^\star_1$ under each kernel; (ii) the scaled bias of the posterior mean, defined as the average of $\sqrt N (\widehat\beta_1 - \beta_{01})$ with $\widehat\beta = \E_\Pi(\beta^\star \mid \bZ_N)$; and (iii) the scaled variance, defined as the average of $N \Var_\Pi(\beta^\star_1 \mid \bZ_N)$. If the posterior satisfies a semiparametric Bernstein-von Mises result, we expect coverage near 95\%, scaled bias near zero, and scaled variance near 1, since $\E_{\theta_0}(N^{-1}\bX^\top\bX)=\Identity$ in this setup.

\paragraph{Results}
Figure~\ref{fig:gp-bvm} shows results across kernels and sample sizes. Among the three, only the squared-exponential-plus-linear kernel empirically satisfies the semiparametric Bernstein-von Mises theorem at the chosen sample sizes.
In larger samples, we do see the asymptotic variance of the squared exponential and Laplace kernels approaching the semiparametric efficiency, and indeed asymptotically we do attain the semiparametric efficiency level; the problem with these kernels is that, even in very large samples, the estimates are highly biased, which is clearly seen in the middle panel of Figure~\ref{fig:gp-bvm}.

\subsection{Diagnosing Model Misspecification in Practice}
\label{sec:diagnosing}

We now show how to diagnose model misspecification in practice by analyzing data from the Medical Expenditure Panel Survey (MEPS, \citealp{cohen2009medical}). We consider a subset of the 2011 survey with 16,113 observations, with the goal of predicting self-assessed health status.
We use Model NR with $Y_i$ representing the self-assessed health status of individual $i$ and $X_i$ representing the individual's age, body mass index (BMI), education level, income relative to the poverty line, region of the country, sex, marital status, race, seatbelt use, and smoking status. Although self-assessed health status is ordinal, here we use Model NR as a simple working Gaussian model for prediction and projection-based summarization; extending the same workflow to ordinal or logistic-style models is also possible.

To illustrate the ability of the GBART model to default to a linear model when there is little data while also adapting to nonlinearities and interactions in the data, we consider two analyses: one using a randomly selected sample of size $N = 1000$ and one using a sample of size $N = 8000$. For both cases we monitor the summary $R^2$ of the projection onto a linear regression. All projection summaries reported below are in-sample summaries computed on the analyzed sample. To assess how closely the semiparametric Bernstein-von Mises result holds, we also monitor the posterior distribution of the coefficient of BMI.

\begin{figure}[t]
  \centering
  \includegraphics[width=1\textwidth]{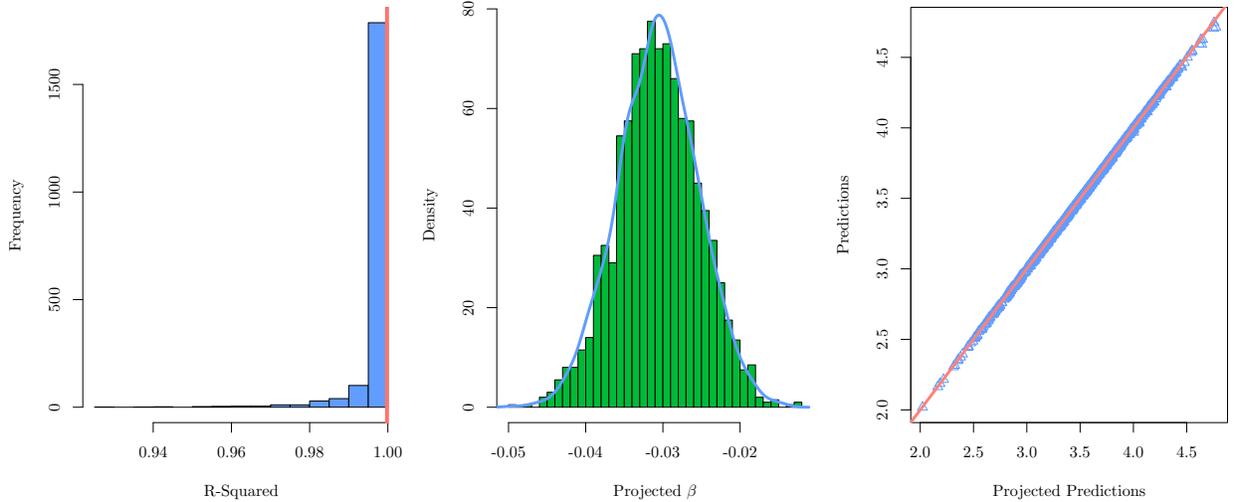}
  \caption{For $N = 1000$: the posterior summary $R^2$ and overall summary $R^2$ of the linear model (left), posterior distribution of the coefficient of BMI compared with the result from the linear model (middle), and plot of the projected predictions against the actual predictions from the model (right).\label{fig:meps1000}}
\end{figure}

Results for the $N = 1000$ model are given in Figure~\ref{fig:meps1000}. We see that the model is shrunk very heavily toward a linear regression model, with the posterior summary $R^2$ being very close to $100\%$. The posterior distribution of the coefficient of BMI is also very close to the posterior from a parametric Bayesian linear model, indicating that the semiparametric Bernstein-von Mises approximation is accurate. Finally, the projected linear regression model gives predictions that are virtually indistinguishable from the full general BART model. This indicates that, at $N = 1000$, there is limited evidence of model misspecification so that the general BART model defaults to the linear model.

\begin{figure}[t]
  \centering
  \includegraphics[width=1\textwidth]{figure/meps8000.pdf}
  \caption{
    For $N = 8000$: the posterior summary $R^2$ and overall summary $R^2$ of the linear model (left), posterior distribution of the coefficient of BMI compared with the result from the linear model (middle), and plot of the projected predictions against the actual predictions from the model (right).\label{fig:meps8000}
  }
\end{figure}

By contrast, results for the $N = 8000$ model are given in Figure~\ref{fig:meps8000}. While the summary $R^2$ of the model is still quite high, we see that the linear regression model no longer captures all of the variability in the predictions. Nevertheless, the semiparametric Bernstein-von Mises result still appears to hold, with the projection parameter $\beta_{\text{BMI}}$ having a posterior distribution in agreement with the parametric linear regression.

\begin{figure}[t]
  \centering
  \includegraphics[width=1\textwidth]{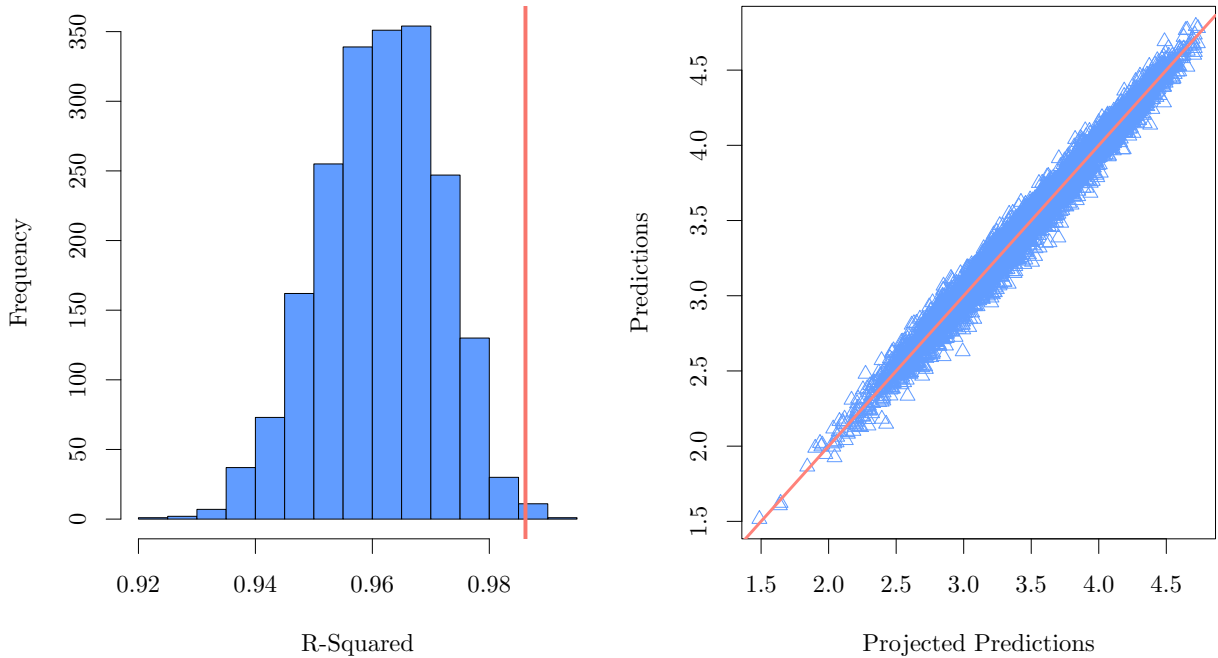}
  \caption{
    For the $N = 8000$ additive model: the posterior summary $R^2$ and overall summary $R^2$ of the linear model (left) and plot of the projected predictions against the actual predictions from the model (right).\label{fig:mepsadditive}
  }
\end{figure}

From the analysis for the $N = 8000$ setting, we see that there is evidence of misspecification of the linear regression, implying that there are either nonlinearities or interactions in the data. An advantage of the fully-Bayesian approach is that we are free to summarize the posterior in any way we choose, and in particular we can look for better models to project the posterior of $\mu(x)$ onto. To this end, in Figure~\ref{fig:mepsadditive} we give the summary $R^2$ and projected predictions for a model that includes spline terms for age, BMI, poverty level, with interaction terms for (age) $\times$ BMI and (age) $\times$ (poverty level). After making this modification, we find that the overall summary $R^2$ of the model exceeds 98\%, with the posterior distribution of $R^2$ centered around 96\%, giving a roughly $10\%$ increase in predictive variability accounted for.

\begin{figure}[t]
  \centering
  \includegraphics[width = 0.7\textwidth]{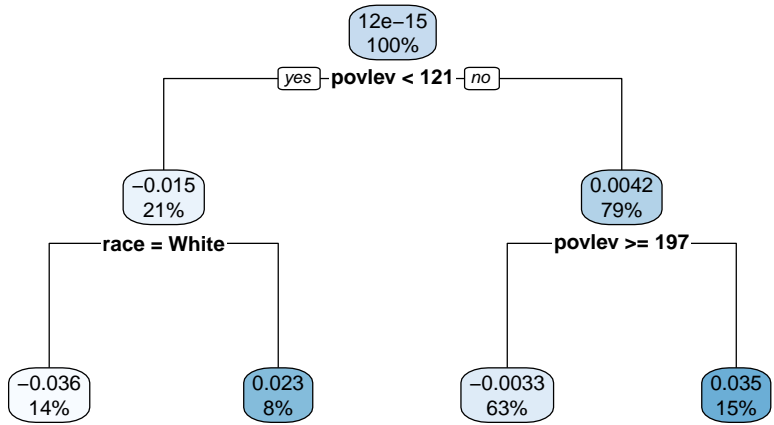}
  \caption{
    Decision tree fit to the residuals $\widehat \mu(X_i) - \widetilde \mu(X_i)$ from the additive model.\label{fig:mepstree}
  }
\end{figure}

To gain better insight into the remaining sources of variability, Figure~\ref{fig:mepstree} displays a CART fit to the residuals $\widehat \mu(X_i) - \widetilde \mu(X_i)$, where $\widetilde \mu(x)$ denotes the additive summary of $\widehat \mu(x)$. The CART predictions quantify the average discrepancy between the additive summary and the GBART fit within each subgroup. The largest degree of misfit occurs for white individuals with poverty level below $121$, with the additive summary predicting a health status on average 0.036 units higher than the GBART model.
% The largest degree of misfit occurs for individuals who did not complete high school with a BMI of less than 27, with the additive summary predicting a health status on average 0.074 higher than the general BART model.

\paragraph{Summary} Overall, we see that the general BART model is able to adapt to the amount of evidence in favor of the linear model, reverting to the linear model when there is little evidence of misspecification. In addition, when there \emph{is} evidence of misspecification, we have tools for inspecting the posterior to identify subsets of the data that are responsible for the misspecification. We feel that the additive summary is adequate (0.036 off in a small subgroup, relative to a residual standard deviation of around 1), although we could go further in this example by adding, for example, an interaction between poverty level and race.

\section{Discussion}

The position advocated for in this manuscript is one of defensive model expansion, wherein we embed a parametric Bayesian model inside a nonparametric Bayesian model and apply strong shrinkage toward the parametric model. We claim that this retains many of the benefits of specifying a parametric Bayesian model, such as strong performance when $N$ is small or the signal-to-noise ratio is low, while mitigating the drawbacks of potential model misspecification.

The principal drawback of defensive model expansion is that the expanded nonparametric model may be less interpretable than the embedded parametric model. For example, how should we interpret $\eta$ if the nonparametric component is active in 50\% of posterior samples but inactive in the other 50\%? On this point, an advantage of defensive model expansion is that we also obtain measures of the degree of misspecification. If the summary $R^2$ of the parametric submodel is sufficiently large, we may reasonably conclude that little is lost by interpreting the parametric submodel directly, together with guidelines about ``how wrong'' it is expected to be. We find the Bayesian nonparametric approach particularly appealing for implementing this strategy because it can be carried out coherently through posterior projections, without requiring the analyst to fit alternative models. In this way, defensive model expansion provides both robustness to misspecification and a principled path for interpretation.

\bibliographystyle{apalike}
\bibliography{references.bib}

\begin{thebibliography}{}

\bibitem[Castillo, 2024]{castillo2024bayesian}
Castillo, I. (2024).
\newblock {\em Bayesian Nonparametric Statistics: {\'E}cole d’{\'E}t{\'e} de
  Probabilit{\'e}s de Saint-Flour LI-2023}, volume 2358.
\newblock Springer Nature.

\bibitem[Castillo and Rousseau, 2015]{castillo2015bernstein}
Castillo, I. and Rousseau, J. (2015).
\newblock A bernstein-von mises theorem for smooth functionals in
  semiparametric models.
\newblock {\em Annals of Statistics}, 43(6):2353--2383.

\bibitem[Koltchinskii and Gin{\'e}, 2000]{koltchinskii2000random}
Koltchinskii, V. and Gin{\'e}, E. (2000).
\newblock Random matrix approximation of spectra of integral operators.
\newblock {\em Bernoulli}, 6(1):113--167.

\bibitem[Levin et~al., 2009]{levinMarkovChainsMixing2009}
Levin, D.~A., Peres, Y., and Wilmer, E.~L. (2009).
\newblock {\em Markov Chains and Mixing Times}.
\newblock American Mathematical Society, Providence, R.I.

\bibitem[L'Huillier et~al., 2023]{l2023semiparametric}
L'Huillier, A., Travis, L., Castillo, I., and Ray, K. (2023).
\newblock Semiparametric inference using fractional posteriors.
\newblock {\em Journal of Machine Learning Research}, 24(389):1--61.

\bibitem[Rhee and Talagrand, 1986]{rhee1986uniform}
Rhee, W. and Talagrand, M. (1986).
\newblock Uniform bound in the central limit theorem for banach space valued
  dependent random variables.
\newblock {\em Journal of multivariate analysis}, 20(2):303--320.

\bibitem[Szabo et~al., 2025]{szabo2025adaptation}
Szabo, B., Hadji, A., and Van Der~Vaart, A. (2025).
\newblock Adaptation using spatially distributed {G}aussian processes.
\newblock {\em Journal of the American Statistical Association},
  120(552):2784--2795.

\bibitem[Tan et~al., 2024]{tan2024computational}
Tan, Y.~S., Ronen, O., Saarinen, T., and Yu, B. (2024).
\newblock The computational curse of big data for bayesian additive regression
  trees: A hitting time analysis.
\newblock arXiv preprint https://arxiv.org/abs/2406.19958.

\bibitem[van~der Vaart and van Zanten, 2011]{van2011information}
van~der Vaart, A. and van Zanten, H. (2011).
\newblock Information rates of nonparametric {G}aussian process methods.
\newblock {\em Journal of Machine Learning Research}, 12(6).

\bibitem[van~der Vaart, 1998]{vandervaart1998asymptotic}
van~der Vaart, A.~W. (1998).
\newblock {\em Asymptotic Statistics}.
\newblock Cambridge University Press.

\end{thebibliography}


\begin{thebibliography}{}

\bibitem[Bhattacharya et~al., 2019]{bhattacharya2019bayesian}
Bhattacharya, A., Pati, D., and Yang, Y. (2019).
\newblock Bayesian fractional posteriors.
\newblock {\em The Annals of Statistics}, 47(1).

\bibitem[Castillo, 2008]{castillo2008lower}
Castillo, I. (2008).
\newblock Lower bounds for posterior rates with gaussian process priors.
\newblock {\em Electronic Journal of Statistics}, 2:1281--1299.

\bibitem[Castillo, 2024]{castillo2024bayesian}
Castillo, I. (2024).
\newblock {\em Bayesian Nonparametric Statistics: {\'E}cole d’{\'E}t{\'e} de
  Probabilit{\'e}s de Saint-Flour LI-2023}, volume 2358.
\newblock Springer Nature.

\bibitem[Castillo and Rousseau, 2015]{castillo2015bernstein}
Castillo, I. and Rousseau, J. (2015).
\newblock A bernstein-von mises theorem for smooth functionals in
  semiparametric models.
\newblock {\em Annals of Statistics}, 43(6):2353--2383.

\bibitem[Chipman et~al., 2010]{chipman2010bart}
Chipman, H.~A., George, E.~I., and McCulloch, R.~E. (2010).
\newblock {BART: Bayesian additive regression trees}.
\newblock {\em The Annals of Applied Statistics}, 4(1):266 -- 298.

\bibitem[Cohen et~al., 2009]{cohen2009medical}
Cohen, J.~W., Cohen, S.~B., and Banthin, J.~S. (2009).
\newblock The medical expenditure panel survey: a national information resource
  to support healthcare cost research and inform policy and practice.
\newblock {\em Medical care}, 47(7):S44--S50.

\bibitem[Escobar and West, 1995]{escobar1995bayesian}
Escobar, M.~D. and West, M. (1995).
\newblock Bayesian density estimation and inference using mixtures.
\newblock {\em Journal of the american statistical association},
  90(430):577--588.

\bibitem[Gelman and Loken, 2013]{gelman2013garden}
Gelman, A. and Loken, E. (2013).
\newblock The garden of forking paths: Why multiple comparisons can be a
  problem, even when there is no “fishing expedition” or “p-hacking”
  and the research hypothesis was posited ahead of time.
\newblock {\em Department of Statistics, Columbia University}, 348(1-17):3.

\bibitem[Gelman and Shalizi, 2013]{gelman2013philosophy}
Gelman, A. and Shalizi, C.~R. (2013).
\newblock Philosophy and the practice of {B}ayesian statistics.
\newblock {\em British Journal of Mathematical and Statistical Psychology},
  66(1):8--38.

\bibitem[George and McCulloch, 1993]{george1993variable}
George, E.~I. and McCulloch, R.~E. (1993).
\newblock Variable selection via {G}ibbs sampling.
\newblock {\em Journal of the American Statistical Association},
  88(423):881–889.

\bibitem[Gershman and Blei, 2012]{gershman2012tutorial}
Gershman, S.~J. and Blei, D.~M. (2012).
\newblock A tutorial on {B}ayesian nonparametric models.
\newblock {\em Journal of Mathematical Psychology}, 56(1):1--12.

\bibitem[Ghahramani et~al., 2007]{ghahramani2007bayesian}
Ghahramani, Z., Griffiths, T.~L., and Sollich, P. (2007).
\newblock Bayesian nonparametric latent feature models.
\newblock {\em Bayesian statistics}, 8(1):1--25.

\bibitem[Ghosal et~al., 2000]{ghosal2000convergence}
Ghosal, S., Ghosh, J.~K., and van~der Vaart, A.~W. (2000).
\newblock Convergence rates of posterior distributions.
\newblock {\em Annals of Statistics}, pages 500--531.

\bibitem[Hannah et~al., 2011]{hannah2011dirichlet}
Hannah, L.~A., Blei, D.~M., and Powell, W.~B. (2011).
\newblock Dirichlet process mixtures of generalized linear models.
\newblock {\em Journal of Machine Learning Research}, 12(6).

\bibitem[Jeong and Rockova, 2023]{jeong2023art}
Jeong, S. and Rockova, V. (2023).
\newblock The art of {BART}: Minimax optimality over nonhomogeneous smoothness
  in high dimension.
\newblock {\em Journal of Machine Learning Research}, 24(337):1--65.

\bibitem[Kamary et~al., 2018]{kamary2018testing}
Kamary, K., Mengersen, K., Robert, C.~P., and Rousseau, J. (2018).
\newblock Bayesian hypotheses testing as a mixture estimation model.
\newblock {\em arXiv preprint arXiv:1412.2044}.

\bibitem[L'Huillier et~al., 2023]{l2023semiparametric}
L'Huillier, A., Travis, L., Castillo, I., and Ray, K. (2023).
\newblock Semiparametric inference using fractional posteriors.
\newblock {\em Journal of Machine Learning Research}, 24(389):1--61.

\bibitem[Li and Dunson, 2020]{li2020comparing}
Li, M. and Dunson, D.~B. (2020).
\newblock Comparing and weighting imperfect models using {D}-probabilities.
\newblock {\em Journal of the American Statistical Association},
  115(531):1349--1360.

\bibitem[Li et~al., 2023]{li2023adaptive}
Li, Y., Linero, A.~R., and Murray, J. (2023).
\newblock Adaptive conditional distribution estimation with {B}ayesian decision
  tree ensembles.
\newblock {\em Journal of the American Statistical Association},
  118(543):2129--2142.

\bibitem[Linero, 2017]{linero2017review}
Linero, A.~R. (2017).
\newblock A review of tree-based bayesian methods.
\newblock {\em Communications for Statistical Applications and Methods}, 24(6).

\bibitem[Linero, 2022]{linero2022softbart}
Linero, A.~R. (2022).
\newblock {SoftBart}: soft {B}ayesian additive regression trees.
\newblock {\em arXiv preprint arXiv:2210.16375}.

\bibitem[Linero et~al., 2022]{linero2022bayesian}
Linero, A.~R., Basak, P., Li, Y., and Sinha, D. (2022).
\newblock Bayesian survival tree ensembles with submodel shrinkage.
\newblock {\em Bayesian Analysis}, 17(3):997--1020.

\bibitem[Linero and Yang, 2018]{linero2018abayesian}
Linero, A.~R. and Yang, Y. (2018).
\newblock Bayesian regression tree ensembles that adapt to smoothness and
  sparsity.
\newblock {\em Journal of the Royal Statistical Society Series B: Statistical
  Methodology}, 80(5):1087--1110.

\bibitem[MacEachern, 2000]{maceachern2000dependent}
MacEachern, S.~N. (2000).
\newblock Dependent {D}irichlet processes.
\newblock Technical report, Department of Statistics, The Ohio State
  University.

\bibitem[Martin and Walker, 2019]{martin2019data}
Martin, R. and Walker, S.~G. (2019).
\newblock Data-driven priors and their posterior concentration rates.
\newblock {\em Electronic Journal of Statistics}, 13:3049--3081.

\bibitem[Murray et~al., 2008]{murray2008gaussian}
Murray, I., MacKay, D., and Adams, R.~P. (2008).
\newblock The {G}aussian process density sampler.
\newblock {\em Advances in neural information processing systems}, 21.

\bibitem[Neal, 1995]{neal1995bayesian}
Neal, R.~M. (1995).
\newblock {\em Bayesian Learning for Neural Networks}.
\newblock PhD thesis, University of Toronto.

\bibitem[O'Hagan, 1995]{o1995fractional}
O'Hagan, A. (1995).
\newblock Fractional bayes factors for model comparison.
\newblock {\em Journal of the Royal Statistical Society: Series B
  (Methodological)}, 57(1):99--118.

\bibitem[Orbanz and Teh, 2011]{orbanz2011bayesian}
Orbanz, P. and Teh, Y.~W. (2011).
\newblock Bayesian nonparametric models.
\newblock In {\em Encyclopedia of machine learning}, pages 81--89. Springer.

\bibitem[Plumlee and Joseph, 2018]{plumlee2018orthogonal}
Plumlee, M. and Joseph, V.~R. (2018).
\newblock Orthogonal gaussian process models.
\newblock {\em Statistica Sinica}, pages 601--619.

\bibitem[Quintana et~al., 2022]{quintana2022dependent}
Quintana, F.~A., M{\"u}ller, P., Jara, A., and MacEachern, S.~N. (2022).
\newblock The dependent {D}irichlet process and related models.
\newblock {\em Statistical Science}, 37(1):24--41.

\bibitem[Rasmussen and Williams, 2006]{rasmussen2006gaussian}
Rasmussen, C.~E. and Williams, C. K.~I. (2006).
\newblock {\em Gaussian Processes for Machine Learning}.
\newblock MIT Press.

\bibitem[Rockov{\'a}, 2020]{rockova2020semi}
Rockov{\'a}, V. (2020).
\newblock On semi-parametric inference for {BART}.
\newblock In {\em International Conference on Machine Learning}, pages
  8137--8146. PMLR.

\bibitem[Ro{\v{c}}kov{\'a} and van~der Pas, 2020]{rovckova2020posterior}
Ro{\v{c}}kov{\'a}, V. and van~der Pas, S. (2020).
\newblock Posterior concentration for bayesian regression trees and forests.
\newblock {\em The Annals of Statistics}, 48(4):2108--2131.

\bibitem[Shahbaba and Neal, 2009]{shahbaba2009nonlinear}
Shahbaba, B. and Neal, R. (2009).
\newblock Nonlinear models using {D}irichlet process mixtures.
\newblock {\em Journal of Machine Learning Research}, 10(8).

\bibitem[Tan and Roy, 2019]{tan2019bayesian}
Tan, Y.~V. and Roy, J. (2019).
\newblock Bayesian additive regression trees and the general bart model.
\newblock {\em Statistics in medicine}, 38(25):5048--5069.

\bibitem[van~der Vaart and van Zanten, 2011]{van2011information}
van~der Vaart, A. and van Zanten, H. (2011).
\newblock Information rates of nonparametric {G}aussian process methods.
\newblock {\em Journal of Machine Learning Research}, 12(6).

\bibitem[van~der Vaart, 1998]{vandervaart1998asymptotic}
van~der Vaart, A.~W. (1998).
\newblock {\em Asymptotic Statistics}.
\newblock Cambridge University Press.

\bibitem[Walker and Hjort, 2001]{walker2001bayesian}
Walker, S. and Hjort, N.~L. (2001).
\newblock On {B}ayesian consistency.
\newblock {\em Journal of the Royal Statistical Society: Series B (Statistical
  Methodology)}, 63(4):811--821.

\bibitem[Woody et~al., 2021]{woody2021model}
Woody, S., Carvalho, C.~M., and Murray, J.~S. (2021).
\newblock Model interpretation through lower-dimensional posterior
  summarization.
\newblock {\em Journal of Computational and Graphical Statistics},
  30(1):144--161.

\bibitem[Xie and Xu, 2021]{xie2021bayesian}
Xie, F. and Xu, Y. (2021).
\newblock Bayesian projected calibration of computer models.
\newblock {\em Journal of the American Statistical Association},
  116(536):1965--1982.

\bibitem[Yao et~al., 2018]{yao2018using}
Yao, Y., Vehtari, A., Simpson, D., Gelman, A., Clarke, B., Li, M.,
  Gr{\"u}nwald, P., Rianne~de Heide, A., and Weimin, W. (2018).
\newblock Using stacking to average bayesian predictive distributions (with
  discussion).
\newblock {\em Bayesian Analysis}, 13(3):917--1007.

\end{thebibliography}

\end{document}

% --- supplement: 2025-Parametric-Adaptivity-Supplement.tex ---

\maketitle

\tableofcontents

\doublespacing

\section{Proofs}

\subsection{Proof of Theorem~\ref{thm:general-1}}

Let $M_N \uparrow \infty$ and let $\epsilon_N = M_N \, N^{-1/2}$ with $M_N$ such that $\epsilon_N \to 0$ without loss of generality. By PPA we have that for sufficiently large $N$
\begin{equation*}
  \begin{aligned}
    B_{\epsilon_N}
    &\supseteq \{\theta : \|\eta - \eta_0\|^2 + \|r - r_0\|_{\sH}^2 \le  \epsilon_N^2 / C(\theta_0)\}
    \\&\supseteq \{\theta : \max\{\|\eta - \eta_0\|^2,  \|r - r_0\|^2_{\sH}\} \le \epsilon_N^2 / 2C(\theta_0)\}
    = B^\star.
  \end{aligned}
\end{equation*}
For notational convenience, let $L^2_N = M_N^2 / 2C(\theta_0)$. By prior independence, we have for large enough $N$ that
\begin{equation*}
  \begin{aligned}
    \Pi(B^\star)
    &= \Pi_\eta(\|\eta - \eta_0\|^2 \le L_N^2 N^{-1}) \times \Pi_r(\|r - r_0\|^2_{\sH} \le L_N^2 N^{-1})
    \\&\ge
    \frac{\pi(\eta_0) \, C_D}{2}(L_N N^{-1/2})^{P}
    C_1 \, e^{-C_2 \, L_N^2}
    \\&\ge C_1^\star \, e^{-C^\star_2 \, L_N^2},
  \end{aligned}
\end{equation*}
where $C_P > 0$ is the volume of the unit ball in $\mathbb R^P$ and $(C_1^\star, C_2^\star)$ are constants depending on $(\pi(\eta_0), C_P, C_1, C_2)$; the prior-thickness lower bound applies because $L_N \uparrow \infty$, and the polynomial factor can be absorbed into the exponential because $L_N^2 \gg \log N$. The result follows from Theorem 2.1 of \citet{castillo2024bayesian}.

\subsection{Proof of Corollary~\ref{cor:general-1}}

The proof follows immediately from Theorem~\ref{thm:general-1} and the fact that $\Pi(\|r - r_0\|_{\sH} = 0) \ge p_0 > 0$ for all $N$, and so is trivially lower bounded by $e^{-M_N^2}$ for sufficiently large $N$.

\subsection{Proof of Theorem~\ref{thm:general-consistency}}

First, we note that $\Pi^\star_r$ must fail to assign positive mass to $[r = r_0]$, as otherwise the assumption that $\E_{\theta_0} \Pi^\star_\alpha\{H(\theta_0, \theta) < M_N \, N^{-1/2} \mid \bZ_N\} \to 0$ would not hold by Corollary~\ref{cor:general-1}. We write
\begin{equation*}
  \begin{aligned}
    &\Pi_\alpha\{H(\theta_0, \theta) < M_N \, N^{-1/2} \mid \bZ_N\}
    \\&=
    \Pi_{\alpha}\{H(\theta_0, \theta) < M_N \, N^{-1/2} \mid \bZ_N, r = r_0\} \, \Pi_{\alpha}(r = r_0 \mid \bZ_N)
    \\&\quad+ 
    \Pi_{\alpha}\{H(\theta_0, \theta) < M_N \, N^{-1/2} \mid \bZ_N, r \ne r_0\} \, \Pi_{\alpha}(r \ne r_0 \mid \bZ_N)
    \\&=
    \Pi_{\alpha}\{H(\theta_0, \theta) < M_N \, N^{-1/2} \mid \bZ_N, r = r_0\} \, \Pi_{\alpha}(r = r_0 \mid \bZ_N)
    \\&\quad+ 
    \Pi^\star_{\alpha}\{H(\theta_0, \theta) < M_N \, N^{-1/2} \mid \bZ_N\} \, \Pi_{\alpha}(r \ne r_0 \mid \bZ_N),
  \end{aligned}
\end{equation*}
with the last equality holding due to the fact that $\Pi^\star_r$ assigns no mass to $[r = r_0]$. By Corollary~\ref{cor:general-1} we know that this expression must tend to $1$ in $F_{\theta_0}$-probability. However, all of the quantities are bounded by $1$ and by assumption $\Pi^\star_\alpha\{H(\theta_0, \theta) < M_N \, N^{-1/2} \mid \bZ_N\} \to 0$ in $F_{\theta_0}$-probability. Consequently, the only way that $\Pi_{\alpha}\{H(\theta_0, \theta) < M_N \, N^{-1/2} \mid \bZ_N\} \to 1$ in $F_{\theta_0}$-probability can hold is if
\begin{equation*}
  \Pi_{\alpha}\{H(\theta_0, \theta) < M_N \, N^{-1/2} \mid \bZ_N, r = r_0\} \to 1
  \quad \text{and} \quad
  \Pi_{\alpha}(r = r_0 \mid \bZ_N) \to 1
\end{equation*}
in $F_{\theta_0}$-probability, completing the proof.

\subsection{Proof of Theorem~\ref{thm:general-bvm}}

First, we state the additional assumption required to prove this result.

\begin{paragraph}{Assumption BVM}
  The family $\sF_0$ is differentiable in quadratic mean (see \citealp{vandervaart1998asymptotic} Section 5.5), with associated score function $s(\eta)$ and unit Fisher information $\Fisher(\eta)$, such that $\Fisher(\eta_0)$ is non-singular. Additionally, for all $\epsilon > 0$ there exists a sequence of tests $\phi_1, \phi_2, \ldots$ mapping $\sZ^N \to [0,1]$ such that $\E_{(r_0, \eta_0)} \, \phi_N(\bZ_N) \to 0$ and $\sup_{\eta : \|\eta - \eta_0\| > \epsilon} \E_{(r_0, \eta)} 1 - \phi_N(\bZ_N) \to 0$ as $N \to \infty$.
\end{paragraph}

\begin{proof}
  Define $\Pi_\eta^\star(d\eta \mid \bZ_N) = \Pi_\eta(d\eta) \, L(r_0, \eta)^\alpha / \int \Pi_\eta(d\eta) \, L(r_0, \eta)^\alpha$, which has density $\pi^\star(\eta \mid \bZ_N) = \pi_\eta(\eta) \, L(r_0, \eta)^\alpha / \int \pi_\eta(t) \, L(r_0, t)^\alpha \ dt$.

  Let $\widehat \eta$ be any estimator such that $\sqrt N (\widehat \eta - \eta_0) \to \Normal\{0, \Fisher(\eta_0)^{-1}\}$. We note that Assumption BVM, together with PPA, provides the sufficient conditions required for the standard Bernstein-von Mises result for fractional posteriors to hold:
  \begin{equation*}
    \|\Pi^\star_\eta(d\eta \mid \bZ_N) -
    \Normal\{d\eta \mid \widehat \eta, (\alpha N)^{-1} \Fisher(\eta_0)^{-1}\}\|_{TV}
    \to 0 \quad \text{in $F_{\theta_0}$-probability}.
  \end{equation*}
  See, for example, Theorem 10.1 of \citet{vandervaart1998asymptotic} and \citet{l2023semiparametric}. By the triangle inequality, it therefore suffices to show that
  \begin{equation}
    \label{eq:bvm-target}
    \|\Pi_\eta^\star(d\eta \mid \bZ_N) - \Pi_\alpha(d\eta \mid \bZ_N)\|_{TV} \to 0
  \end{equation}
  in $F_{\theta_0}$-probability. Now, consider repeatedly sampling $(r, \eta) \sim \Pi_\alpha(d\theta \mid \bZ_N)$ until $r = r_0$, and define $\eta$ to be the first sample in the sequence and $\eta^\star$ to be the last; note that $\eta^\star \sim \Pi_\eta^\star(d\eta \mid \bZ_N)$ while $\eta \sim \Pi_{\alpha}(d\eta\mid \bZ_N)$. This construction couples the two marginal distributions using the same initial draw whenever that draw already satisfies $r = r_0$. An upper bound on the total variation distance between these two distributions is therefore given by the posterior probability that $\eta^\star \ne \eta$, which is upper bounded by the posterior probability that $r \ne r_0$ (see \citealp{levinMarkovChainsMixing2009} Proposition 4.7, for example). Hence
  \begin{equation*}
    \|\Pi_\eta^\star(d\eta \mid \bZ_N) - \Pi_\alpha(d\eta \mid \bZ_N)\|_{TV} \le \Pi_\alpha(r \ne r_0 \mid \bZ_N).
  \end{equation*}
  (Note: this argument implicitly assumes that $\Pi_\alpha(r \ne r_0 \mid \bZ_N) < 1$, but the bound still holds trivially if $\Pi_\alpha(r \ne r_0 \mid \bZ_N) = 1$.)
  But by Theorem~\ref{thm:general-consistency} we know $\Pi_\alpha(r \ne r_0 \mid \bZ_N) \to 0$ in $F_{\theta_0}$-probability, which implies \eqref{eq:bvm-target}.
\end{proof}

\subsection{Proof of Theorem~\ref{thm:bart-adaptivity}}

First, PPA holds for this setting with $\|r - r_0\|_{\sH}$ the supremum norm $\sup_{x \in [0,1]^D} |r(x) - r_0(x)|$. Additionally, with positive probability all of the decision trees in the ensemble will contain no splits, which reduces to the model
\begin{equation*}
  Y_i = r + X_i^\top\beta + \epsilon_i
\end{equation*}
where $r \sim \Normal(0, \sigma^2_\mu)$ independently of $\beta$. This is a parametric linear regression model with intercept $r$ (or $r + \beta_1$ if $X_{i1}$ is an intercept term), and therefore by standard arguments assigns sufficient mass to neighborhoods of $(\beta_0, \sigma_0)$ to guarantee the rate.

The fact that $\Pi_\alpha(\sN \mid \bZ_N) \to 1$ in $F_{\theta_0}$-probability is a consequence of Proposition~4.3 of \citet{tan2024computational}.

\subsection{Proof of Theorem~\ref{thm:bart-bvm}}

By Theorem~\ref{thm:bart-adaptivity}, $\Pi_{\alpha}(\sN \mid \bZ_N) \to 1$ in $F_{\theta_0}$-probability, and hence the model reduces with probability tending to 1 to
\begin{equation*}
  Y_i = (\beta^\star_1, \beta_2, \ldots, \beta_D)^\top X_i + \epsilon_i,
  \quad \epsilon_i \sim \Normal(0, \sigma^2),
\end{equation*}
where $\beta^\star_1 = r(0) + \beta_1$. This reduced model is differentiable in quadratic mean and satisfies the conditions of Theorem 10.1 of \citet{vandervaart1998asymptotic} (the additional condition that $\Sigma_X$ is non-singular is needed for the testing condition to hold). The rest of the proof follows as in the proof of Theorem~\ref{thm:general-bvm} with the event $[r = r_0]$ replaced with the event $\sN$.

\subsection{Proof of Theorem~\ref{thm:sbvm}}

Following routine calculations, one can show that $\beta^\star = (\bX^\top \bX)^{-1} \bX^\top \bmu$. This is a linear functional of $\mu(\cdot)$ and hence, because the posterior of $\mu(\cdot)$ is a Gaussian process, $\beta^\star$ is normally distributed.

Let $K \in \Reals^{N \times N}$ have $(i, i')^{\text{th}}$ entry given by $\kappa(X_i, X_{i'})$. Then the posterior distribution of $\bmu$ is given by
\begin{equation*}
  \bmu \sim \Normal\{K \, (K + \Identity)^{-1} \bY, K - K (K + \Identity)^{-1} K\}
\end{equation*}
using properties of the multivariate Gaussian distribution
\begin{equation*}
  \binom{\bY}{\bmu} \sim \Normal\left\{ \binom{0}{0}, \begin{pmatrix} K + \Identity & K \\ K & K \end{pmatrix} \right\}.
\end{equation*}
Consequently, the posterior of $\beta^\star$ is
\begin{equation*}
  \beta^\star \sim \Normal\{
  B K (K + \Identity)^{-1} \bY,
  B [K - K(K + \Identity)^{-1} K] B^\top
  \}
\end{equation*}
where $B = (\bX^\top \bX)^{-1} \bX^\top$. Noting that $\widehat \beta = B \, \bY$, our goal is to find the limiting values of
\begin{equation*}
  \sqrt N \, B \left\{ K (K + \Identity)^{-1} - \Identity \right\} \bY
  \quad \text{and} \quad
  N B [K - K (K + \Identity)^{-1} K] B^\top.
\end{equation*}
We begin by noting the identity $K - K (K + \Identity)^{-1} K = K (K + \Identity)^{-1} = \Identity - (K + \Identity)^{-1}$, which can be proved by eigendecomposition. From this, the variance expression reduces to
\begin{equation*}
  N B K \, (K + \Identity)^{-1} B^\top = \widehat \Sigma_N^{-1} \{N^{-1} \bX^\top [\Identity - (K + \Identity)^{-1}] \bX\} \widehat \Sigma_N^{-1} \quad \text{where} \quad \widehat \Sigma_N = N^{-1} \sum_i X_i \, X_i^\top.
\end{equation*}
Because $\widehat \Sigma_N \to \Sigma_X$ it suffices to show that $\mathcal R_N = \frac{1}{N} \bX^\top (K + \Identity)^{-1} \bX \to \zeros$. This is immediate from the to-be-shown fact that $\bX^\top (K + \Identity)^{-1} \bX \preceq \bX^\top \bX / (1 + c \, N)$ for some positive constant $c$ with probability tending to $1$, which uses the linear component of $K$, but the following argument shows that this holds even without the linear component. To do this we show that $\|\mathcal R_N\|_{\text{op}}$ given by the largest eigenvalue of $\mathcal R_N$ also tends to zero.
To this end, let $\Gamma \, D \, \Gamma^\top$ denote the eigendecomposition of $(\Identity + K)^{-1}$, which has diagonal entries $1 / (1 + \lambda_r)$ where $\lambda_N \le \lambda_{N-1} \le \cdots \le \lambda_1$ are the eigenvalues of $K$. Then for any $\beta \in \Reals^P$ with $\|\beta\| = 1$ we have
\begin{equation*}
  \beta^\top \mathcal R_N \beta = \frac{1}{N} \sum_{r = 1}^N \frac{(\gamma_r^\top \, \bX \, \beta)^2}{1 + \lambda_r},
\end{equation*}
where $\gamma_r$ is the $r^{\text{th}}$ eigenvector of $(\Identity + K)^{-1}$. Fix an integer $M \ge 1$ to be chosen later. We can then split the sum into two pieces
\begin{equation*}
  \beta^\top \mathcal R_N \beta \le
  \underbrace{\frac{1}{N}\,\frac{1}{1+\lambda_M}\,\|\Gamma_M^\top \bX\beta\|_2^2}_{\text{top }M}
  +
  \underbrace{\frac1N \|(\Identity-\Gamma_M\Gamma_M^\top)\bX\beta\|_2^2}_{\text{tail}},
\end{equation*}
where $\Gamma_M = [\gamma_1, \ldots, \gamma_M]$ contains the first $M$ eigenvectors, and the inequality follows from setting $(1 + \lambda_r) \ge 1$ for $r > M$ and $(1 + \lambda_r) \ge 1 + \lambda_M$ for $r \le M$. Therefore,
\begin{equation*}
  \|\mathcal R_N\|_{\mathrm{op}}
  = \sup_{\|\beta\|=1}\beta^\top \mathcal R_N \beta
  \le
  {\frac{1}{1+\lambda_M}\times \frac{\|\bX\|_{\mathrm{op}}^2}{N}}
  +
  {\sup_{\|\beta\|=1}\frac 1 N \|(\Identity-P_M)\bX\beta\|_2^2},
  \quad \text{where} \quad P_M= \Gamma_M \Gamma_M^\top .
\end{equation*}
Let $\mu_M>0$ denote the $M$th eigenvalue of the Mercer operator associated with $\kappa$. For fixed $M$, the $M$th eigenvalue $\lambda_M$ of the unnormalized empirical Gram matrix satisfies $\lambda_M/N \to \mu_M$ in probability, and hence $\lambda_M \asymp N$. In particular, if $\kappa$ is universal on a compact domain, then $\mu_M>0$ for every fixed $M$.
% For universal kernels we have $\lambda_M \gg \Omega_P(N)$ and $\|\bX\|^2_{\text{op}} = O_P(N)$, so the first term tends to $0$ as $N \to \infty$.
For the second term, let $\kappa(x, x') = \sum_{m = 1}^\infty \mu_m \, \psi_m(x) \, \psi_m(x')$ denote the spectral decomposition of $\kappa(x,x')$. Because $\kappa(x,x')$ is a universal kernel, each linear coordinate $f_j(x) = x_j$ lies in the $L^2(F_X)$-closure of $\text{span}(\psi_1, \psi_2, \ldots)$; therefore, for any $\epsilon > 0$ there exists $M$ such that
\begin{align*}
  \sup_{\|\beta\| = 1} \|f_\beta - H_M f_\beta\|^2_{L^2(F_X)} \le \epsilon,
  \quad \text{where} \quad f_\beta(x) = x^\top \beta
\end{align*}
and $H_M$ is the projection operator onto the span of $\{\psi_1, \ldots, \psi_M\}$. By the uniform law of large numbers and convergence of the empirical projection operators (see \citealp{koltchinskii2000random}) we have
\begin{equation*}
  \sup_{\|\beta\|=1}\frac 1 N \|(\Identity-P_M)\bX\beta\|_2^2
  \to \sup_{\|\beta\| = 1}\|f_\beta - H_M \, f_\beta\|_{L_2(F_X)}^2 \le \epsilon \quad \text{$F_{\theta_0}$-almost surely.}
\end{equation*}
Hence,
\begin{equation*}
  \limsup \|\mathcal R_N\|_{\text{op}}
  \le 
  \epsilon \qquad \text{$F_{\theta_0}$-almost surely},
\end{equation*}
which completes the proof for the variance term because $\epsilon > 0$ was arbitrary.

Next, we address the asymptotic bias. From the previously noted identity $K (K + \Identity)^{-1} = \Identity - (K + \Identity)^{-1}$ our goal will be to find the limiting value of
\begin{equation*}
  \sqrt N \, B \, (K + \Identity)^{-1} \bY.
\end{equation*}
To prove the bias condition, we proceed in two steps:
\begin{enumerate}
\item We show that the result holds whenever $K$ satisfies $\bX^\top (K + \Identity)^{-1} \bX \preceq \frac{\bX^\top \bX}{1 + c \, N}$ for some $c > 0$ with probability tending to $1$, where $A \preceq B$ for positive semi-definite matrices means that $B - A$ is a positive semi-definite matrix.
\item We show that our choice of $K$ satisfies this condition.
\end{enumerate}
We first decompose into bias and noise terms as
\begin{equation*}
  \sqrt N (\bX^\top \bX)^{-1} \bX^\top (K + \Identity)^{-1} \, \bY
  = \underbrace{\frac{1}{\sqrt N}\widehat \Sigma_N^{-1}\bX^\top (K + \Identity)^{-1} \bX\,\beta_0}_{\text{Bias}}
  + \underbrace{\frac{1}{\sqrt N}\widehat \Sigma_N^{-1}\bX^\top (K + \Identity)^{-1} \epsilon}_{\text{Noise}}.
\end{equation*}
where $\epsilon \sim \Normal(0, \Identity)$. To handle the bias term, we show that its squared norm tends to $0$. This is given by
\begin{equation*}
  \frac{1}{N} \beta_0^\top \bX^\top (K + \Identity)^{-1} \bX \widehat \Sigma_N^{-1} \widehat \Sigma_{N}^{-1} \bX^\top (K + \Identity)^{-1} \bX \, \beta_0.
\end{equation*}
Applying our assumption, we can bound the expression above (on a set with probability approaching 1) as
\begin{equation*}
  \frac{1}{N (1 + cN)^2} \beta_0^\top \bX^\top \bX \widehat \Sigma_N^{-2} \bX^\top \bX \beta_0
  = \frac{N^2}{N (1 + cN)^2} \beta_0^\top \beta_0 \stackrel{N \to \infty}{\longrightarrow} 0.
\end{equation*}
Similarly, we show that the expected squared norm of the noise term also tends to zero. As the noise term is mean-$0$, our goal is to establish that
\begin{equation*}
  \Var(\text{Noise})
  =
  \frac{1}{N} \widehat \Sigma_N^{-1} \bX^\top (K + \Identity)^{-1} \bX \widehat \Sigma_N^{-1} \stackrel{N \to \infty}{\longrightarrow} 0.
\end{equation*}
As above, this follows from the fact that $\bX^\top (K + \Identity)^{-1} \bX \preceq \frac{\bX^\top\bX}{1 + c \, N}$.

It remains to establish that $\bX^\top (K + \Identity)^{-1} \bX \preceq \frac{\bX^\top \bX}{1 + c \, N}$ for some constant $c$. To do this, we apply the following fact: for any positive definite matrix $M$ and subspace $\mathcal S$, if it holds for some $m > 0$ that
\begin{math}
  v^\top M v \ge m \, \|v\|^2
\end{math}
for all $v \in \mathcal S$ then it also holds that $v^\top M^{-1} v \le \frac{1}{m} \|v\|^2$ for all $v \in \mathcal S$. In this case, take
$v$ to be any member of the column space of $\bX$ and let $M = K + \Identity$. Then $v^\top M v = v^\top (\sigma^2_\beta \bX \, \bX^\top + K^\star + \Identity) v \ge \|v\|^2 + \sigma^2_\beta v^\top \bX \, \bX^\top v$ where $K^\star$ corresponds to the piece of the kernel matrix for the squared exponential term. But for $v$ in the column space of $\bX$ we have
\begin{equation*}
  \|v\|^2 + \sigma^2_\beta v^\top \bX \bX^\top v
  \ge \|v\|^2 (1 + c^\star \, N) \quad \text{where} \quad c^\star = N^{-1} \sigma^2_\beta \, \lambda_{\text{min}}(\bX^\top\bX) = \sigma^2_\beta \, \lambda_{\text{min}}(\widehat \Sigma_N)
\end{equation*}
and where $\lambda_{\text{min}}(A)$ denotes the minimal eigenvalue of $A$. As $\lambda_{\text{min}}(\widehat \Sigma_N) \cinp \lambda_{\text{min}}(\Sigma_X)$ we can take $c = \sigma^2_\beta \, \lambda_{\min}(\Sigma_X) / 2$.

\subsection{Proof of Theorem~\ref{thm:logistic-bvm}}

We begin by noting that it suffices to show that, for each fixed $\lambda \in \Reals^P$, the Bernstein-von Mises result holds for $\sqrt N (\lambda^\top \beta^\star - \lambda^\top \widehat \beta)$, that is, that the posterior converges to a $\Normal(0, \lambda^\top I^{-1} \lambda)$ distribution. This allows us to apply strategies for posterior convergence of scalar functionals as studied by \citet{castillo2015bernstein} and \citet{l2023semiparametric}.

Next, we introduce some additional notation. Let $\theta(x) = x^\top\beta + r(x)$ and define the log-likelihood as $\ell_N(\theta) = \sum_i f_i(\theta)$ where $f_i(\theta) = Y_i \, \theta(X_i) - \log(1 + e^{\theta(X_i)})$ is the unit-level log-likelihood. Further, adopting the shorthand $\mu_\theta(x) = (1 + e^{-\theta(x)})^{-1}$, let $f'_i(\theta) = \{Y_i - \mu_\theta(X_i)\}$ and $f''_i(\theta) = - \mu_\theta(X_i) \, \{1 - \mu_\theta(X_i)\}$ denote the first and second derivatives of $f_i(\theta)$ with respect to $\theta(X_i)$. Let $w_i = -f''_i(\theta_0)$. Define the observed unit Fisher information $\Ihat = N^{-1} \sum_i w_i \, X_i \, X_i^\top$, the empirical process $W_N(h) = N^{-1/2} \sum_{i=1}^N \{Y_i - \mu_0(X_i)\} \, h(X_i)$, the inner product $\langle a, b \rangle_N = N^{-1} \sum_i w_i \, a(X_i) \, b(X_i)$ with associated norm $\|\cdot\|_N$, and the influence function $\psi_0(x) = \lambda^\top \Ihat^{-1} \, x$. We then observe that $\lambda^\top \beta^\star - \lambda^\top \widehat \beta = \langle \theta - \theta_0, \psi_0 \rangle_N - W_N(\psi_0) / \sqrt N$. Finally, we define a local perturbation of $\theta$ as $\theta_t(x) = \theta(x) - h_t(x)$ where $h_t(x) = t \, \psi_0(x) / \sqrt N$.

We impose the following further assumptions referenced in the statement of Theorem~\ref{thm:logistic-bvm}.

\paragraph{Assumption A}
Suppose that the conditions of Theorem~\ref{thm:logistic-bvm} hold. Then we assume further the following conditions on $\Pi(dr,d\beta)$ and $(r_0, \beta_0)$.
\begin{itemize}
\item \textbf{Prior thickness:} Let
  \begin{align*}
    B_{\epsilon_N} = \left\{ (r,\beta) : \E_{r_0, \beta_0} [\ell_N(\theta_0) - \ell_N(\theta)]
    \le N \, \epsilon_N^2
    \quad \text{and} \quad 
    \Var_{r_0, \beta_0} [\ell_N(\theta_0) - \ell_N(\theta)]
    \le N \, \epsilon_N^2
    \right\}.
  \end{align*}
  Then $\Pi(B_{\epsilon_N}) \ge C_1 e^{-C_2 \, N \, \epsilon_N^2}$ for $\epsilon_N = N^{-1/4 - \delta}$ for some small $0 < \delta < 1/4$.
\item \textbf{$L_2$ posterior convergence:} For every constant $K$, we have $\Pi(\|\theta - \theta_0\|_N > K \, N^{-1/4 - \delta} \mid \bZ_N) \cinp 0$ where $\|\theta - \theta_0\|^2_N = \frac{1}{N} \sum_i \mu_0(X_i) \{1 - \mu_0(X_i)\} \, \{\theta(X_i) - \theta_0(X_i)\}^2$. 
% \item \textbf{Prior for $\beta$:} The prior is $\Pi(dr, d\beta) = \Pi(dr) \ \Pi(d\beta)$ where $\Pi(d\beta)$ is a normal distribution truncated to $\|\beta\|_\infty \le M$ for some $M$ large enough that $\|\beta_0\|_\infty < M$.
\item \textbf{Prior for $\beta$:} The prior is $\Pi(dr, d\beta) = \Pi(dr) \ \Pi(d\beta)$ where $\Pi(d\beta)$ is a normal distribution with mean $\mu_\beta$ and covariance $\Sigma_\beta$.
  % For results implying $L_2$-convergence of \citet{van2008rates}, who provide posterior contraction rates for logistic regression models, with \citet{van2009adaptive} stating following the statement of their Theorem~3.1
  %  that the standard conditions for posterior contraction of Gaussian process posteriors also gives $L_2$-convergence for logistic regression.
% \item \textbf{Orthogonality of kernels:} For simplicity, we assume that the reproducing kernel Hilbert space associated to $\kappa^\star(x,x')$ does not contain linear functions $x \mapsto \lambda^\top x$ unless $\lambda \equiv 0$. This implies that the RKHS norm of the function $\psi_0(x) = \beta^\top x$ for $\kappa$, denoted $\|\psi_0\|_{\RKHS}$, is given by $\|\beta\| / \sigma_\beta$.
\end{itemize}

We discuss the $L_2$-posterior convergence rate assumption after the proof. The prior thickness assumption is usually also invoked to prove $L_2$-posterior convergence, but we state it explicitly as it also implies the evidence lower bound
\begin{align*}
  \int \exp\left\{ \ell_N(\theta) - \ell_N(\theta_0) \right\} \ \Pi(dr) \ \Pi(d\beta)
  \ge C_1 e^{-(C_2 + 2) \ N \, \epsilon_N^2}
\end{align*}
holds with probability at least $1 - (N \epsilon_N^2)^{-1}$ (see Lemma~1.1 of \citealp{castillo2024bayesian} with $C = 1$).

We now give a nonparametric local asymptotic normality (LAN) result, which is a starting point for most Bernstein-von Mises type results.

\begin{lemma}
  \label{lem:lem1}
  Under the definitions above, we can write
  \begin{align*}
    \ell_N(\theta) - \ell_N(\theta_t) &= t \, W_N(\psi_0) - \frac{N\{\|\theta - \theta_0\|^2_N - \|\theta_t - \theta_0\|^2_N\}}{2}
    + R(\theta, \theta_t)
  \end{align*}
  where
  \begin{align*}
    |R(\theta, \theta_t)| \le C \, t \|\Ihat^{-1} \, \lambda\| \sqrt N [\|\theta - \theta_0\|^2_N + \|\theta_t - \theta_0\|^2_N +
      \|\theta - \theta_0\|_N \times \|\theta_t - \theta_0\|_N],
  \end{align*}
  for some constant $C$ depending only on $\theta_0$. In particular, if $A_N$ is contained in $\{(r,\beta) : \|\theta - \theta_0\|_N \le \epsilon_N\}$ with $N^{1/4} \, \epsilon_N \to 0$ then $\sup_{\theta \in A_N} |R(\theta, \theta_t)| = o(1)$. Additionally, $t \sqrt N (\lambda^\top \beta^\star - \lambda^\top \widehat \beta) + \ell_N(\theta) - \ell_N(\theta_t) - R(\theta, \theta_t) = \frac{t^2 \lambda^\top \Ihat^{-1} \lambda }{2}$.
\end{lemma}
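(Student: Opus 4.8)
The plan is to prove the Lemma by a second-order Taylor expansion of the log-likelihood difference around $r$, carefully tracking the remainder. Since $\ell_N(r) - \ell_N(r_t) = \sum_i \{f_i(r) - f_i(r_t)\}$ and $r_t = r - h_t$ with $h_t = t\,\psi_0/\sqrt N$, I would expand each $f_i(r_t) = f_i(r - h_t)$ to second order in $h_t$. Writing $f_i(r) - f_i(r_t) = f_i'(r)\,h_t(X_i) - \tfrac12 f_i''(\tilde r_i)\,h_t(X_i)^2$ for some intermediate $\tilde r_i$ (or, more cleanly, expanding around $r_0$ and collecting terms), the leading linear term should reproduce $t\,W_N(\psi_0)$ plus the inner-product difference, and everything of higher order gets pushed into $R(r,r_t)$. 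The key algebraic identity to exploit is that $\|r - r_0\|_N^2 - \|r_t - r_0\|_N^2$ expands into a cross term $2\langle r - r_0, h_t\rangle_N$ minus $\|h_t\|_N^2$, which is exactly what a quadratic approximation of the likelihood difference produces; the definitions of $W_N$, $\langle\cdot,\cdot\rangle_N$, and $\psi_0$ are engineered so that these pieces line up.

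Concretely, I would first establish the exact decomposition by writing $f_i(r) - f_i(r_t)$ using the mean-value form of Taylor's theorem, then substitute $f_i'(r_0)(X_i) = Y_i - \mu_0(X_i)$ and $-f_i''(r_0) = w_i$. Summing $f_i'(r_0)\,h_t(X_i)$ over $i$ and using $h_t = t\,\psi_0/\sqrt N$ gives precisely $t\,W_N(\psi_0)$ by the definition of the empirical process. The quadratic term $\tfrac12\sum_i w_i\,h_t(X_i)^2 = \tfrac12\|h_t\|_N^2$ together with the cross term $\sum_i w_i\,(r-r_0)(X_i)\,h_t(X_i) = \langle r-r_0, h_t\rangle_N$ must be reorganized into $-\tfrac{N}{2}\{\|r-r_0\|_N^2 - \|r_t - r_0\|_N^2\}$, which follows from the polarization identity since $r_t - r_0 = (r - r_0) - h_t$. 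Everything left over — the discrepancy between evaluating $f_i'$ and $f_i''$ at $r$ versus $r_0$, and the cubic Taylor remainder — is collected into $R(r,r_t)$.

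The main obstacle will be bounding $R(r,r_t)$ to obtain the stated inequality with the $\|\Ihat^{-1}\lambda\|$ factor. The remainder involves differences like $f_i'(r) - f_i'(r_0)$ and $f_i''(\tilde r_i) - f_i''(r_0)$, each controlled by the Lipschitz behavior of $\mu_r$ and $\mu_r(1-\mu_r)$ in $r$; since these are smooth and bounded for logistic models, $|f_i'(r) - f_i'(r_0)| \le C\,|r(X_i) - r_0(X_i)|$ and similarly for the second derivative. Multiplying by the appropriate powers of $h_t(X_i) = t\,\psi_0(X_i)/\sqrt N$ and summing, then applying Cauchy--Schwarz with respect to $\langle\cdot,\cdot\rangle_N$, produces terms of the form $t\sqrt N\,\|\psi_0\|_N$ times quadratic combinations of $\|r - r_0\|_N$ and $\|r_t - r_0\|_N$. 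The factor $\|\Ihat^{-1}\lambda\|$ enters because $\psi_0(x) = \lambda^\top\Ihat^{-1}x$, so $\|\psi_0\|_N$ and its sup are controlled by $\|\Ihat^{-1}\lambda\|$ up to constants depending on the design; verifying this cleanly, and checking that the weight $w_i$ bounds transfer correctly, is the delicate part.

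Finally, the $o(1)$ consequence is immediate: on $A_N \subseteq \{\|r-r_0\|_N \le \epsilon_N\}$ we have $\|r_t - r_0\|_N \le \|r - r_0\|_N + \|h_t\|_N \le \epsilon_N + O(N^{-1/2})$, so every quadratic term in the bound is $O(\epsilon_N^2)$, and the prefactor $\sqrt N$ yields $\sup_{r \in A_N}|R(r,r_t)| = O(\sqrt N\,\epsilon_N^2) = O((N^{1/4}\epsilon_N)^2) = o(1)$ under $N^{1/4}\epsilon_N \to 0$. The last displayed identity is then just algebra: substituting $\lambda^\top\beta^\star - \lambda^\top\widehat\beta = \langle r - r_0, \psi_0\rangle_N - W_N(\psi_0)/\sqrt N$ and using $\langle r - r_0, \psi_0\rangle_N = \tfrac{\sqrt N}{t}\langle r - r_0, h_t\rangle_N$ together with $\|h_t\|_N^2 = t^2\lambda^\top\Ihat^{-1}\lambda/N$ (which uses $\langle\psi_0,\psi_0\rangle_N = \lambda^\top\Ihat^{-1}\lambda$) collapses the expression to $\tfrac{t^2\lambda^\top\Ihat^{-1}\lambda}{2}$.
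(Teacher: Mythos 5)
Your proposal is correct and follows essentially the same route as the paper: a Taylor expansion whose linear term gives $t\,W_N(\psi_0)$, whose quadratic terms reassemble via polarization into $-\tfrac{N}{2}\{\|r-r_0\|_N^2 - \|r_t-r_0\|_N^2\}$, and whose cubic remainder is controlled by Cauchy--Schwarz using $|h_t(X_i)| \le t\,\|\Ihat^{-1}\lambda\|\,\|X_i\|/\sqrt N$, followed by the same $O(\sqrt N \epsilon_N^2) = o(1)$ and algebraic-identity arguments. The only cosmetic differences are that the paper expands $f_i'$ about $r_0$ inside the integral $\int_{r_t}^{r} f_i'(u)\,du$ (yielding a single $a^3-b^3$ remainder) whereas you expand in powers of $h_t$ about $r$ and re-center at $r_0$ (yielding extra pieces like $\sum_i |h_t|^3$ that absorb into the stated bound via $\|h_t\|_N \le \|r-r_0\|_N + \|r_t-r_0\|_N$), plus some dropped factors of $N$ in your intermediate identities such as $\sum_i w_i h_t(X_i)^2 = N\|h_t\|_N^2$, which your final reorganization handles correctly anyway.
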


\begin{proof}[Proof of lemma]
  First,
  \begin{align*}
    f_i(\theta) - f_i(\theta_t) = \int_{\theta_t(X_i)}^{\theta(X_i)} f'_i(u) \ du
  \end{align*}
  where, abusing notation, we take (for example) $f_i'(u) = \{Y_i - (1 + e^{-u})^{-1}\}$. We then expand $f'_i(u)$ about $\theta_0(X_i)$ to obtain
  \begin{align*}
    f_i(\theta) - f_i(\theta_t) &= h_t(X_i) f'_i(\theta_0) + \frac{[\{\theta(X_i) - \theta_0(X_i)\}^2 - \{\theta_t(X_i) - \theta_0(X_i)\}^2]}{2} f''_i(\theta_0)
      \\&\quad+ \int_{\theta_t(X_i)}^{\theta(X_i)} \frac{f'''_i(u^\star) (u - \theta_0(X_i))^2}{2} \ du,
  \end{align*}
  for some $u^\star$ between $u$ and $\theta_0(X_i)$. Summing over $i$, we obtain
  \begin{align*}
    \ell_N(\theta) - \ell_N(\theta_t) &= t \, W_N(\psi_0) - \frac{N\{\|\theta - \theta_0\|^2_N - \|\theta_t - \theta_0\|^2_N\}}{2}
    + R(\theta, \theta_t)
  \end{align*}
  where, by the triangle inequality and the fact that $|f'''_i(u^\star)|$ is uniformly bounded, we have
  \begin{align*}
    |R(\theta, \theta_t)| &\le
      C_f \, \sum_i |\{\theta(X_i) - \theta_0(X_i)\}^3 - \{\theta_t(X_i) - \theta_0(X_i)\}^3|,
    \\&= C_f \, \sum_i |h_t(X_i)| [\{\theta(X_i) - \theta_0(X_i)\}^2  \\&\qquad +\{\theta(X_i) - \theta_0(X_i)\}\{\theta_t(X_i) - \theta_0(X_i)\} + \{\theta_t(X_i) - \theta_0(X_i)\}^2]
  \end{align*}
  for some $C_f > 0$. Now, by Cauchy-Schwarz we have $|h_t(X_i)| \le t \|\Ihat^{-1} \lambda\| \times  \|X_i\| / \sqrt N = O(N^{-1/2})$ because $\|X_i\|$ is bounded and $\Ihat \to I$. Consequently, by another application of Cauchy-Schwarz we have 
  \begin{align*}
    |R(\theta, \theta_t)| = O(N^{1/2}) \{\|\theta - \theta_0\|_N^2 + \|\theta_t - \theta_0\|_N^2 + \|\theta - \theta_0\|_N \, \|\theta_t - \theta_0\|_N\}.
  \end{align*}
  By the definition of $A_N$ in our lemma (and the fact that $\|\theta - \theta_0\|_N = o(N^{-1/4})$ implies $\|\theta_t - \theta_0\|_N = o(N^{-1/4})$), each of the terms in the above summation is uniformly $o(N^{-1/2})$ so that $\sup_{\theta \in A_N} |R(\theta, \theta_t)| = o(1)$.
  The statement that $t \sqrt N(\lambda^\top \beta^\star - \lambda^\top \widehat \beta) + \ell_N(\theta) - \ell_N(\theta_t) - R(\theta, \theta_t) = \frac{t^2 \, \lambda^\top \Ihat^{-1} \lambda}{2}$ follows from applying the definitions of the various terms and simplifying.
\end{proof}

We use the Gaussian prior for $\beta$ both for its algebraic properties under changes of variables and, as in the lemma below, to ensure that the posterior assigns negligible probability to $\beta$ being too large.

\begin{lemma}
  \label{lem:2}
  Under Assumption~A, we have $\Pi(\|\beta - \mu_\beta\| > N^{1/4} \mid \bZ_N) \cinp 0$.
\end{lemma}

\begin{proof}
  From Lemma~4 of \citet{rhee1986uniform}, the prior satisfies $\Pi(\|\beta - \mu_\beta\| > N^{1/4}) \le \exp\left( -C N^{1/2} \right)$ for some constant $C$ depending on $(\mu_\beta, \Sigma_\beta)$. From Lemma~1.2 of \citet{castillo2024bayesian}, the fact that prior thickness holds with $N \epsilon_N^2 = N^{1/2 - 2\delta} \ll N^{1/2}$ implies that $\Pi(\|\beta - \mu_\beta\| > N^{1/4} \mid \bZ_N) \cinp 0$.
\end{proof}

We use the same argument as the proof of Theorem~5.2 of \citet{castillo2024bayesian} which, in view of the fact that we have verified the nonparametric LAN condition, reduces to checking that
\begin{align*}
  \frac{\int_{A_N} e^{\ell_N(\theta_t)} \ \Pi(dr) \ \Pi(d\beta)}{\int e^{\ell_N(\theta)} \ \Pi(dr) \ \Pi(d\beta)} = 1 + o_P(1)
\end{align*}
holds for some set $A_N$ with $\Pi(A_N \mid \bZ_N) = 1 + o_P(1)$. We adopt the shorthand $b = \frac{t \Ihat^{-1} \lambda}{\sqrt N}$. We set $A_N = B_N \cap C_N$ where $B_N = \{(r,\beta) : \|\theta - \theta_0\|_N \le N^{-1/4 - \delta}\}$ and $C_N = \{\|\beta - \mu_\beta\| \le N^{1/4}\}$. From Assumption~A and Lemma~\ref{lem:2}, we have $\Pi(A_N \mid \bZ_N) \cinp 1$. As argued by \citet{castillo2015bernstein}, letting $\Delta = \sqrt N (\lambda^\top \beta^\star - \lambda^\top \widehat \beta)$, it is sufficient to show that
\begin{align*}
  \E(e^{t \Delta} \mid \bZ_N, A_N) = (1 + o_P(1)) e^{t^2 \lambda^\top I^{-1} \lambda / 2}
\end{align*}
holds for all $t \in \Reals$. 
To show that, we compute the left-hand-side as
\begin{align*}
  \frac{\int_{A_N} e^{t \Delta + \ell_N(\theta) - \ell_N(\theta_t)} \, e^{\ell_N(\theta_t)} \ \Pi(dr) \ \Pi(d\beta)}{\int_{A_N} e^{\ell_N(\theta)} \ \Pi(dr) \ \Pi(d\beta)}
  = e^{t^2 \lambda^\top \Ihat^{-1} \lambda / 2} \frac{\int_{A_N} e^{R(\theta, \theta_t)} e^{\ell_N(\theta_t)}\ \Pi(dr) \ \Pi(d\beta)}{\int_{A_N} e^{\ell_N(\theta)} \ \Pi(dr) \ \Pi(d\beta)}.
\end{align*}
By Lemma~\ref{lem:lem1}, the fact that $\Ihat \to I$, and the fact that $\Pi(A_N \mid \bZ_N) = 1 + o_P(1)$, this is
\begin{align}
  \label{eq:finally}
  (1 + o_P(1)) e^{t^2 \lambda^\top I^{-1} \lambda / 2} \frac{\int_{A_N} e^{\ell_N(\theta_t)} \ \Pi(dr) \ \Pi(d\beta)}{\int e^{\ell_N(\theta)} \ \Pi(dr) \ \Pi(d\beta)}.
\end{align}
It therefore suffices to show that $\int_{A_N} e^{\ell_N(\theta_t)} \ \Pi(dr) \ \Pi(d\beta) = (1 + o_P(1)) \int e^{\ell_N(\theta)} \ \Pi(dr) \ \Pi(d\beta)$. To do this, we make the change of variables $\beta \mapsto \beta - b$, which gives
\begin{align*}
  \int_{A_{Nt}} \exp\left\{ \ell_N(\theta) - (\beta - \mu_\beta)^\top \Sigma_\beta^{-1} b - \frac{b^\top \Sigma_\beta^{-1} b}{2} \right\}
  \ \Pi(dr) \ \Pi(d\beta),
\end{align*}
where $A_{Nt} = \{\|\theta_{-t} - \theta_0\|_N \le N^{-1/4 - \delta}\} \cap \{\|\beta + b - \mu_\beta\| \le N^{1/4}\}$. Next, we remove the auxiliary terms from the exponential. From the definition of $b$ and the fact that $\Ihat^{-1} \to I^{-1}$, we have $b^\top \Sigma^{-1}_\beta b / 2 = O(N^{-1})$. For the inner product, note that on $A_{Nt}$ we have $\|\beta - \mu_\beta\| \le \|\beta - \mu_\beta + b\| + \|b\| \le N^{1/4} + \|b\|$; applying Cauchy-Schwarz we have uniformly on $A_{Nt}$ that
\begin{align*}
  |(\beta - \mu_\beta)^\top \Sigma_\beta^{-1} b| \le \|\beta - \mu_\beta\| \times \|\Sigma_\beta^{-1} b\|
  \le \{N^{1/4} + \|b\|\} \|\Sigma_\beta^{-1} b\| = O(N^{-1/4}).
\end{align*}
Consequently, we can rewrite the integral as
\begin{math}
  (1 + o_P(1)) \int_{A_{Nt}} e^{\ell_N(\theta)} \ \Pi(dr) \ \Pi(d\beta),
\end{math}
so it suffices to show that
\begin{align*}
  \int_{A_{Nt}} e^{\ell_N(\theta)} \ \Pi(dr) \ \Pi(d\beta) = (1 + o_P(1)) \int e^{\ell_N(\theta)} \ \Pi(dr) \ \Pi(d\beta).
\end{align*}
Trivially, we have $\int_{A_{Nt}} e^{\ell_N(\theta)} \ \Pi(dr) \ \Pi(d\beta) \le \int e^{\ell_N(\theta)} \ \Pi(dr) \ \Pi(d\beta)$. To establish a lower bound, note that for sufficiently large $N$ we have
\begin{align*}
  A_{Nt} \supseteq \{\|\theta - \theta_0\|_N \le N^{-1/4-\delta} / 2\} \cap \{\|\bX b\|_N \le N^{-1/4-\delta} / 2\} \cap \{\|\beta - \mu_\beta\| \le N^{1/4} / 2\} = A^\star_N.
\end{align*}
By (respectively) Assumption~A, the definition of $b$, and Lemma~\ref{lem:2}, each of the components of $A^\star_N$ has a posterior probability of $1 + o_P(1)$. Consequently,
\begin{align*}
  \int_{A_{Nt}} e^{\ell_N(\theta)} \ \Pi(dr) \ \Pi(d\beta)
  &\ge \int_{A^\star_N} e^{\ell_N(\theta)} \ \Pi(dr) \ \Pi(d\beta)
  \\&= \Pi(A^\star_N \mid \bZ_N) \int e^{\ell_N(\theta)} \ \Pi(dr) \ \Pi(d\beta)
  \\&= (1 + o_P(1)) \int e^{\ell_N(\theta)} \ \Pi(dr) \ \Pi(d\beta).
\end{align*}
This completes the proof.

\begin{remark}[On $L_2$-Posterior Concentration]
  The main bottleneck for verifying Assumption~A under the assumptions given is verifying posterior concentration with respect to $\|\theta - \theta_0\|_N$. Standard Bayesian nonparametric arguments for posterior concentration in logistic regression in the fixed-design setting show that appropriately-chosen Gaussian process priors contract with respect to $\|\mu - \mu_0\|^2 = \frac{1}{N} \sum_{i = 1}^N \{\mu(X_i) - \mu_0(X_i)\}^2$ where $\mu(x) = e^{x^\top\beta + r(x)} / (1 + e^{x^\top\beta + r(x)})$; unfortunately, this does not imply concentration with respect to $\|\theta - \theta_0\|_N$ (see, for example, \citealp{szabo2025adaptation}).

  For the sake of concreteness, we give a specific example where we are able to prove the required $L_2$-posterior concentration result via Proposition~\ref{prop:logistic-l2}; the sufficient conditions for this proposition cover truncations of common Gaussian process priors, including the squared exponential kernel and the Matern kernel with regularity $\nu > P / 2$ (see Lemma 3 of \citealp{van2011information}). That is, we can take $\Pi(dr) \propto G(dr) \, 1(\|r\|_\infty \le M)$ to be the conditional distribution under $G$ of $[r \mid \|r\|_\infty \le M]$, where $G(\cdot)$ is a Gaussian process satisfying
  \begin{align*}
    G(\|r\|_\infty \le \epsilon_N) > C_1 e^{-C_2 \, N \epsilon_N^2}.
  \end{align*}
  Proposition~\ref{prop:logistic-l2} also requires a regularity condition on the design $X_1, X_2, \ldots$. This condition is satisfied almost surely if the $X_i$'s are iid from a distribution $F_X$ that has density on $[0,1]^P$ bounded away from $0$.

  % Let $M > 0$ and let $G(\cdot)$ denote a Gaussian process satisfying the condition
  % \begin{align*}
  %   G(\|r\|_\infty \le \epsilon_N) > C_1 e^{-C_2 \, N \epsilon_N^2},
  % \end{align*}
  % where $\epsilon_N = N^{-1/4-\delta}$. Examples of Gaussian process priors satisfying this condition include the squared exponential kernel and the Matern kernel with regularity $\nu > P / 2$ (see Lemma 3 of \citealp{van2011information}). We then take $\Pi(dr) \propto G(dr) \, 1(\|r\|_\infty \le M)$ to be the conditional distribution under $G$ of $[r \mid \|r\|_\infty \le M]$. It is easy to show that the model under $r \sim G$ satisfies the required prior thickness condition for the logistic regression model under $r_0(x) \equiv 0$, and existing results for fixed-design logistic regression imply posterior concentration with respect to the empirical Hellinger distance
  % \begin{align*}
  %   d_H(\theta, \theta_0) =
  %   \sqrt{\frac{1}{N} \sum_{i = 1}^N \{\sqrt{\mu(X_i)} - \sqrt{\mu_0(X_i)}\}^2 + \{\sqrt{1 - \mu(X_i)} - \sqrt{1 - \mu_0(X_i)}\}^2}
  % \end{align*}
  % at the rate $\epsilon_N$.  
  % Because $\|r_0\|_\infty < M$, this prior thickness also transfers to the truncated prior. Using the fact that $\|\mu - \mu_0\| \lesssim d_H(\theta, \theta_0)$, we have the same convergence rate for $\|\mu - \mu_0\|$ \citep{van2008rates}.

  % To prove the final result, we need to argue that posterior concentration with respect to $\|\theta - \theta_0\|_N $ holds under (i) the truncated Gaussian process prior for $\Pi(dr)$ and (ii) the Gaussian prior $\beta \sim \Normal(\mu_\beta, \Sigma_\beta)$. The challenge to establishing this is that the mean value theorem only provides the comparison
  % \begin{align*}
  %   \frac{1}{N} \sum_i \{\mu(X_i) - \mu_0(X_i)\}^2
  %   = \frac{1}{N} \sum_i F(\theta^\star_i)^2 \{1 - F(\theta^\star_i)\}^2 \{\theta(X_i) - \theta_0(X_i)\}^2
  % \end{align*}
  % where $\theta^\star_i \in [\theta(X_i), \theta_0(X_i)]$ and $F(\eta) = e^\eta / (1 + e^\eta)$. Suppose, however, that we can show that there exists an $A > 0$ such that $\Pi(\|\theta\|_\infty < A \mid \bZ_N) \cinp 1$. If such an $A$ exists (without-loss-of-generality we can take $A > \|\theta_0\|_\infty$) then on this event there exist $(c_A, C_A)$ such that
  % \begin{align*}
  %   c_A \le F(\theta_i^\star)\{1-F(\theta_i^\star)\} \le C_A
  %   \qquad\text{for all }|\theta_i^\star|\le A,
  % \end{align*}
  % Hence $\|\theta_0 - \theta\|_N \le c^{-1}_A \, \|\mu - \mu_0\|$ on this event, implying the same rate of convergence: $[\|\theta - \theta_0\|_N \le \epsilon_N] \supseteq [\|\mu - \mu_0\| \le c_A \epsilon_N] \cap [\|\theta\|_\infty \le A]$, with the latter event having posterior probability converging to 1.

  % To obtain such an $A$, it suffices to show that $\Pi(\|\beta\|_\infty \le L \mid \bZ_N) \cinp 1$ for some $L$, as this would directly imply $\|\theta\|_\infty \le M + \sqrt P L$ by Cauchy-Schwarz. 

  % Now, fix $u > 0$, $j \in \{1, \ldots, P\}$, and $s \in \{-1, 1\}$ and define
  % \begin{align*}
  %   R_{j,s} = \Bigl\{x \in [0,u] : s x_j \ge \frac{u}{2}, \quad x_k \le \frac{u}{4P}\ \text{for all }k \neq j \Bigr\}.
  % \end{align*}
  % We assume that there exists a $u$ such that the sequence $X_1, X_2, \ldots$ is \emph{regular} in the sense that, for sufficiently large $N$, we have $\inf_{j,s} \frac{1}{N} \sum_{i=1}^N 1(X_i \in R_{j,s}) \ge \rho > 0$ for some $\rho$; such a condition would hold almost-surely, for example, if the $X_i$'s were an iid sequence from a distribution $F_X$ with density bounded away from $0$ on $[0,u]^P$. For a fixed $\beta$ such that  $\|\beta\|_\infty > L$ let $j^\star$ be such that $|\beta_{j^\star}| = \|\beta\|_\infty$, and put
  % \begin{math}
  %   s^\star = \text{sign}(\beta_j^\star).
  % \end{math}
  % Then, for any $x \in R_{j^\star,s^\star}$,
  % \begin{align*}
  %   x^\top \beta
  %   &=
  %     x_{j^\star}\beta_{j^\star} + \sum_{k\neq j^\star} x_k \beta_k \\
  %   &\ge \frac{u}{2}\|\beta\|_\infty - \sum_{k\neq j^\star}\frac{u}{4P}\|\beta\|_\infty \\
  %   &\ge \frac{u}{4}\|\beta\|_\infty.
  % \end{align*}
  % Since $|r(x)| \le M$ almost surely under the truncated prior, it follows that
  % \[
  %   \theta(x) = x^\top \beta + r(x) \ge \frac{u}{4}\|\beta\|_\infty - M
  %   \qquad\text{for all }x \in R_{j^\star,s^\star}.
  % \]
  % Now, take $L$ large enough that $F(uL/4 - M) \ge 1 - \delta_0 / 2$ where $1 - \delta_0 = \|\mu_0\|_\infty$. Then on the event $[\|\beta\|_\infty > L]$ we have for all $x \in R_{j^\star, s^\star}$ that $\mu(x) \ge 1 - \delta_0 / 2$ whereas $\mu_0(x) \le 1 - \delta_0$. In particular $|\mu(x) - \mu_0(x)| \ge \delta_0 / 2$ for all $x \in R_{j^\star, s^\star}$. It follows that, for sufficiently large $n$, we have
  % \[
  %   \frac{1}{N}\sum_{i=1}^N \{\mu(X_i)-\mu_0(X_i)\}^2
  %   \ge
  %   \frac{\delta_0^2}{4}\,
  %   \rho.
  % \]
  % In particular, we must have $[\|\mu - \mu_0\| \le \epsilon_N] \subseteq [\|\beta\| \le L]$ whenever $\epsilon_N \le \sqrt{\rho} \delta_0 / 2 $, provided $N$ is taken sufficiently large. Consequently, as argued above, $[\|\theta\|_\infty \le M + \sqrt P L] \supseteq [\|\mu - \mu_0\| \le \epsilon_N]$, so that $\Pi(\|\theta\|_\infty \le A \mid \bZ_N) \cinp 1$ with $A = M + \sqrt P L$.
\end{remark}

\begin{proposition}[A sufficient condition for the $L_2$-concentration part of Assumption~A]
  \label{prop:logistic-l2}
  Consider Model~NC under the assumptions of Theorem~\ref{thm:logistic-bvm}, and suppose that
  $r_0(x) \equiv 0$. Write
  \[
    \theta(x) = x^\top \beta + r(x),
    \qquad
    \theta_0(x) = x^\top \beta_0,
    \qquad
    \mu(x) = F\{\theta(x)\},
    \qquad
    \mu_0(x) = F\{\theta_0(x)\},
  \]
  where $F(t) = (1 + e^{-t})^{-1}$.
  Suppose further that:
  \begin{enumerate}
  \item $\Pi(dr,d\beta) = \Pi_r(dr)\Pi_\beta(d\beta)$, where $\Pi_\beta = \Normal(\mu_\beta,\Sigma_\beta)$ and
    $\Pi_r$ is supported on $\{r : \|r\|_\infty \le M\}$ for some finite $M$;
  \item for some sequence $\epsilon_N \to 0$,
    \[
      \Pi\!\left(
        \|\mu - \mu_0\| > \epsilon_N
        \,\middle|\,
        \bZ_N
      \right) \cinp 0,
      \qquad
      \|\mu - \mu_0\|^2
      =
      \frac{1}{N}\sum_{i=1}^N \{\mu(X_i)-\mu_0(X_i)\}^2;
    \]
  \item there exist constants $u,\rho>0$ such that, for all sufficiently large $N$,
    \[
      \inf_{1\le j\le P}
      \frac{1}{N}\sum_{i=1}^N 1(X_i\in R_j)
      \ge \rho,
    \]
    where
    \[
      R_j
      =
      \Bigl\{
        x\in[0,u]^P :
        x_j \ge \frac{u}{2},
        \quad
        x_k \le \frac{u}{4P}
        \ \text{for all }k\neq j
      \Bigr\}.
    \]
  \end{enumerate}
  Then there exist finite constants $L,A,C>0$ such that
  \[
    \Pi(\|\beta\|_\infty > L \mid \bZ_N) \cinp 0,
    \qquad
    \Pi(\|\theta\|_\infty > A \mid \bZ_N) \cinp 0,
  \]
  and
  \[
    \Pi\!\left(
      \|\theta - \theta_0\|_N > C \epsilon_N
      \,\middle|\,
      \bZ_N
    \right) \cinp 0,
  \]
  where
  \[
    \|\theta - \theta_0\|_N^2
    =
    \frac{1}{N}\sum_{i=1}^N
    \mu_0(X_i)\{1-\mu_0(X_i)\}\{\theta(X_i)-\theta_0(X_i)\}^2.
  \]
  In particular, if $\epsilon_N = N^{-1/4-\delta}$ for some $\delta>0$, then the $L_2$-posterior concentration part of Assumption~A holds.
\end{proposition}

\begin{proof}
  Since $X_i \in [0,1]^P$ and $\theta_0(x)=x^\top\beta_0$, the true linear predictor is uniformly bounded on $[0,1]^P$. Hence there exists a constant $\delta_0 \in (0,1/2)$ such that
  \[
    \delta_0 \le \mu_0(x) \le 1-\delta_0
    \qquad
    \text{for all }x\in[0,1]^P.
  \]
  We first show that the posterior cannot put mass on arbitrarily large values of $\|\beta\|_\infty$.
  Fix $\beta$ with $\|\beta\|_\infty > L$, let $j^\star$ be such that
  $|\beta_{j^\star}| = \|\beta\|_\infty$, and consider $x\in R_{j^\star}$.
  Then
  \[
    x^\top\beta
    =
    x_{j^\star}\beta_{j^\star}
    +
    \sum_{k\neq j^\star} x_k \beta_k.
  \]
  If $\beta_{j^\star}>0$, then
  \[
    x^\top\beta
    \ge
    \frac{u}{2}\|\beta\|_\infty
    -
    \sum_{k\neq j^\star}\frac{u}{4P}\|\beta\|_\infty
    \ge
    \frac{u}{4}\|\beta\|_\infty.
  \]
  If $\beta_{j^\star}<0$, then
  \[
    x^\top\beta
    \le
    -\frac{u}{2}\|\beta\|_\infty
    +
    \sum_{k\neq j^\star}\frac{u}{4P}\|\beta\|_\infty
    \le
    -\frac{u}{4}\|\beta\|_\infty.
  \]
  Since $|r(x)|\le M$ almost surely under $\Pi_r$, it follows that on $R_{j^\star}$ we have either
  \[
    \theta(x) \ge \frac{u}{4}\|\beta\|_\infty - M
    \qquad\text{or}\qquad
    \theta(x) \le -\frac{u}{4}\|\beta\|_\infty + M.
  \]
  Choose $L$ large enough that
  \[
    F\!\left(\frac{uL}{4}-M\right) \ge 1-\frac{\delta_0}{2}
    \qquad\text{and}\qquad
    F\!\left(-\frac{uL}{4}+M\right) \le \frac{\delta_0}{2}.
  \]
  Then on the event $\{\|\beta\|_\infty > L\}$, for every $x\in R_{j^\star}$ we have
  \[
    |\mu(x)-\mu_0(x)| \ge \frac{\delta_0}{2}.
  \]
  By the regularity assumption on the design,
  \[
    \|\mu-\mu_0\|^2
    =
    \frac{1}{N}\sum_{i=1}^N \{\mu(X_i)-\mu_0(X_i)\}^2
    \ge
    \frac{\delta_0^2}{4}\rho
  \]
  for all sufficiently large $N$. Therefore, if $\epsilon_N < \delta_0\sqrt{\rho}/2$, then eventually
  \[
    [\|\mu-\mu_0\|\le \epsilon_N]
    \subseteq
    [\|\beta\|_\infty \le L].
  \]
  Since the posterior contracts in $\|\mu-\mu_0\|$ at rate $\epsilon_N$, it follows that
  \[
    \Pi(\|\beta\|_\infty > L \mid \bZ_N)\cinp 0.
  \]
  Because $|r(x)|\le M$ almost surely and $|x^\top\beta|\le \sqrt{P}\|\beta\|_\infty$ on $[0,1]^P$, we obtain
  \[
    \|\theta\|_\infty \le M + \sqrt{P}\|\beta\|_\infty.
  \]
  Hence, with $A = M + \sqrt{P}L$,
  \[
    \Pi(\|\theta\|_\infty > A \mid \bZ_N)\cinp 0.
  \]
  Finally, work on the event $\{\|\theta\|_\infty \le A\}$, and put
  \[
    A_0 = \max\{A,\|\theta_0\|_\infty\},
    \qquad
    c_A = \inf_{|t|\le A_0} F'(t) > 0.
  \]
  By the mean value theorem, for each $i$ there exists $\xi_i$ between $\theta(X_i)$ and $\theta_0(X_i)$ such that
  \[
    |\mu(X_i)-\mu_0(X_i)|
    =
    F'(\xi_i)\,|\theta(X_i)-\theta_0(X_i)|
    \ge
    c_A\,|\theta(X_i)-\theta_0(X_i)|.
  \]
  Therefore,
  \[
    \|\theta-\theta_0\|_N^2
    =
    \frac{1}{N}\sum_{i=1}^N
    \mu_0(X_i)\{1-\mu_0(X_i)\}\{\theta(X_i)-\theta_0(X_i)\}^2
    \le
    \frac{1}{4c_A^2}\,
    \frac{1}{N}\sum_{i=1}^N \{\mu(X_i)-\mu_0(X_i)\}^2
    =
    \frac{1}{4c_A^2}\|\mu-\mu_0\|^2.
  \]
  Thus
  \[
    \Pi\!\left(
      \|\theta-\theta_0\|_N > \frac{\epsilon_N}{2c_A}
      \,\middle|\,
      \bZ_N
    \right)
    \le
    \Pi(\|\mu-\mu_0\|>\epsilon_N \mid \bZ_N)
    +
    \Pi(\|\theta\|_\infty>A \mid \bZ_N)
    \cinp 0.
  \]
  This proves the result.
\end{proof}

\begin{comment}

  We take the rate $N^{-1/3}$ for convenience, but the argument works for any rate $\epsilon_N = o(N^{-1/4})$ such that Assumption A holds; these rates are at least plausible due to the fact that the parametric model is correct, provided that the Gaussian process is chosen appropriately (a fixed squared exponential kernel suffices). Posterior concentration with respect to the norm $\|r - r_0\|_{N}$ is generally more difficult to establish than posterior concentration with respect to $\|\mu - \mu_0\|^2 = \frac{1}{N} \sum_{i=1}^N \{\mu(X_i) - \mu_0(X_i)\}^2$ (for results on the latter, see \citealp{van2008rates} and \citealp{van2009adaptive} for random designs and \citealp{szabo2025adaptation} for fixed designs). To the best of the authors knowledge, establishing the necessary rates is an open problem for commonly-used Gaussian processes.

% Define $B_{Nt} = \{(r,\beta) : \|\theta_{-t} - \theta_0\| \le N^{-1/4 + \delta}\}$ and $C_{Nt} = XXX$ and as above let $A_{Nt} = B_{Nt} \cap C_{Nt}$. Then, define $A_N = B_{N0} \cap \{(r,\beta) : (r, \beta - b) \in C_{Nt}\}$ so that $(r,\beta) \in A_N$ implies $(r, \beta - b) \in A_{Nt}$. 
% By Lemma~XXX and our assumed posterior concentration rate, we have $\Pi(A_{Nt} \mid \bZ_N) = 1 + o_P(1)$. 
But by the definition of $A_{Nt}$, both $\frac{b^\top \Sigma^{-1}_\beta b}{2}$ and $(\beta - \mu_\beta)^{\top} \Sigma_\beta^{-1} b$ are uniformly $o(1)$ over $A_{Nt}$. We therefore obtain
\begin{align*}
  (1 + o_P(1)) \int_{A_{Nt}} e^{\ell_N(\theta)} \Pi(dr) \ \Pi(d\beta) =
  (1 + o_P(1)) \Pi(A_{Nt} \mid \bZ_N) \int e^{\ell_N(\theta)} \ \Pi(dr) \ \Pi(d\beta).
\end{align*}
Finally, as we noted that $\Pi(A_{Nt} \mid \bZ_N) = 1 + o_P(1)$, we can plug this into \eqref{eq:finally} we obtain the result.

% and
% \begin{align*}
%   C_{Nt} = \left\{ (r,\beta) : \frac{b^\top \Sigma_\beta^{-1} b}{2} + \left| (\beta - \mu_\beta)^{\top} \Sigma_{\beta}^{-1} b \right| \le N^{-\delta} \right\}.
% \end{align*}

\vspace{10em}

% \begin{align*}
%   \frac{\phi(\beta \mid \mu_\beta, \Sigma_\beta) \times 1(\|\beta\|_\infty)}{}
% \end{align*}

\vspace{10em}

In addition to a nonparametric LAN result, a key component of the proof strategy used by \citet{l2023semiparametric} is an application of the Cameron-Martin theorem. Following the proof of Theorem~2.9 of that paper, we define a random isometry on the RKHS $\RKHS$ associated with $\kappa(x,x')$ for the Gaussian process $r$ by mapping
\begin{align*}
  U_r : \sum_{j = 1}^k a_j \, \kappa(\cdot, x_j) \mapsto \sum_{j = 1}^k a_j \, r(x_j),
\end{align*}
and extending to $\RKHS$ by taking $L_2$-limits. This isometry is such that $U_r(h) \sim \Normal(0, \|h\|_{\RKHS}^2)$ for any $h \in \RKHS$. The following lemma bounds the magnitude of $U_r(\psi_0)$ in the posterior.

\begin{lemma}
  \label{lem:lem2}
  Let $C_{Nt} = \{r : |U_r(\psi_0) + \langle \psi_0, h_t \rangle_{\RKHS}| \le N^{1/3}\}$. Then for all $t \in \Reals$ we have $\Pi(C_{Nt} \mid \bZ_N) \cinp 1$ as $N \to \infty$.
\end{lemma}

\begin{proof}[Proof of lemma]
  Let $\epsilon_N = N^{-1/3}$ and note that $\|\psi_0\|^2_{\RKHS} = \frac{\|\Ihat^{-1} \lambda\|^2}{\sigma^2_\beta} \to \frac{\|I^{-1} \lambda\|^2}{\sigma^2_\beta}$ as $N \to \infty$. We first prove that $\Pi\{|U_r(\psi_0)| > N^{1/3} / 2 \mid \bZ_N\} = o_P(1)$, and it suffices to show that 
  \begin{equation*}
    \frac{\Pi\{|U_r(\psi_0)| > N^{1/3} / 2\}}{\Pi(B_{\epsilon_N}) e^{-2 N \epsilon_N^2}} = o(1),
  \end{equation*}
  by Lemma B.2 of \citet{l2023semiparametric}. Because $U_r(\psi_0) \sim \Normal(0, \|\psi_0\|_{\RKHS}^2)$ under the prior, we have $\Pi\{U_r(\psi_0) > N^{1/3} / 2\} \le 2 e^{-N^{2/3} \|\psi_0\|_{\RKHS}^2 / 8}$. By our prior thickness assumption, we also have $\Pi(B_{\epsilon_N}) \ge C_1 e^{-C_2 \, N^{1/3}}$. Lemma B.2 therefore applies. But
  \begin{equation*}
    \Pi(C^c_{Nt} \mid \bZ_N) \le \Pi(|U_r(\psi_0)| > N^{1/3} / 2 \mid \bZ_N) + 1(|t| \|\psi_0\|_{\RKHS}^2 / \sqrt N > N^{1/3} / 2) = o_P(1) + o(1) = o_P(1).
  \end{equation*}
\end{proof}

\vspace{10em}

With these ingredients, we are ready to prove the result.

To do this, we apply a variant of the Cameron-Martin theorem to do a change of measure under the transformation $r \mapsto r_{-t}$ to obtain (as in the proof of Theorem 2.9 of \citet{l2023semiparametric})
\begin{align*}
  \int_{A_{Nt}} e^{\ell_N(r) + U_r(-h_t) - \|h_t\|_{\RKHS}^2 / 2} \ \Pi(dr).
\end{align*}
Now, $\|h_t\|_{\RKHS}^2 / 2 = o(1)$ and $|U_r(h_t)| = O(N^{-1/6})$ because $U_r(\psi_0) = O(N^{1/3})$ uniformly on $A_{Nt}$. We can therefore simplify the above expression to
\begin{align*}
  (1 + o_P(1)) \int_{A_{Nt}} e^{\ell_N(r)} \Pi(dr)
  = (1 + o_P(1)) \Pi(A_{Nt} \mid \bZ_N) \, \int e^{\ell_N(r)} \ \Pi(dr).
\end{align*}
Finally, as we noted that $\Pi(A_{Nt} \mid \bZ_N) = 1 + o_P(1)$, we can plug this into \eqref{eq:finally} we obtain the result.

\begin{remark}[On the $L_2$ rate]
  We take the rate $N^{-1/3}$ for convenience, but the argument works for any rate $\epsilon_N = o(N^{-1/4})$ such that Assumption A holds; these rates are at least plausible due to the fact that the parametric model is correct, provided that the Gaussian process is chosen appropriately (a fixed squared exponential kernel suffices). Posterior concentration with respect to the norm $\|r - r_0\|_{N}$ is generally more difficult to establish than posterior concentration with respect to $\|\mu - \mu_0\|^2 = \frac{1}{N} \sum_{i=1}^N \{\mu(X_i) - \mu_0(X_i)\}^2$ (for results on the latter, see \citealp{van2008rates} and \citealp{van2009adaptive} for random designs and \citealp{szabo2025adaptation} for fixed designs). To the best of the authors knowledge, establishing the necessary rates is an open problem for commonly-used Gaussian processes.

  This issue can be bypassed provided that we set $r(x) = f(x) + x^\top\beta$ with $f(\cdot)$ and $\beta$ and truncate the prior so that $\|f\|_\infty, \|\beta\|_\infty \le M$ for $M$ sufficiently large that $\|\beta_0\|_\infty < M$. This accomplishes the goals of (i) preserving the prior thickness condition, (ii) makeing $\|\mu - \mu_0\|$ comperable to $\|r - r_0\|_N$ and, (iii) not impacting the proof (with the modifications that $\Pi(dr)$ is replaced with $\Pi(df) \, \Pi(d\beta)$, and the application of the Cameron-Martin theorem replaced with the change-of-variable $b = \beta - \frac{t \widehat I^{-1}_N \lambda}{\sqrt N}$).
  % Additional assumptions are required to convert convergence in $\|\mu - \mu_0\|$ to convergence in $\|r - r_0\|_N$.
  % ; one path is to assume a regular design on $X_1, X_2, \ldots$ and sufficient smoothness of $\mu, \mu_0$, such that that a fast enough rate for $\|\mu - \mu_0\|$ implies $\|\mu - \mu_0\|_\infty \to 0$. 
  % As suggested by \citet{yang2016bayesian}, a simple way to bypass this technical difficulty with $\|r - r_0\|_N$ is to truncate the posterior to a supremum-norm ball $[\|r\|_\infty \le M]$ for sufficiently large $M$, which makes $\|r - r_0\|_N$ and $\|\mu - \mu_0\|$ directly comperable.
  % One route is to combine a posterior contraction result for $\|\mu - \mu_0\|_N$ at an $N^{-1/3}$ with supremum norm convergence $\|\mu - \mu_0\|_\infty$ at any rate, as we have
  % \begin{align*}
  %   \mu_0(X_i) \{1 - \mu_0(X_i)\}\{r(X_i) - r_0(X_i)\}^2 &\le \frac{\mu_0(X_i) \{1 - \mu_0(X_i)\}\{\mu_0(X_i) - \mu(X_i)\}^2}{\mu^\star(X_i) \{1 - \mu^\star(X_i)\}}
  %   \\&\asymp 
  %   \{\mu_0(X_i) - \mu(X_i)\}^2
  % \end{align*}
  % when $\|\mu - \mu_0\|_\infty$ is small, where $\mu^\star(X_i)$ is between $\mu(X_i)$ and $\mu_0(X_i)$.
\end{remark}

\end{comment}

\bibliographystyle{apalike}
\bibliography{references.bib}